\newtheorem{theorem}{Theorem}
\newtheorem{lemma}{Lemma}
\newtheorem{definition}{Definition}
\newcommand{\I}{\mathbb{I}}
\newcommand{\E}{\mathbb{E}}
\newcommand{\B}{\mathbb{B}}
\newcommand*\circled[1]{\tikz[baseline=(char.base)]{
            \node[shape=circle,draw,inner sep=2pt] (char) {#1};}}
\def\Verify{Verify}
\def\AddSamples{IncreaseSamples}
\def\IM{IM}
\def\CTVM{CTVM}
\def\BIM{BIM}
\def\BRIS{BSA}
\def\ISSA{\textsc{TipTop}}
\newcommand\R{\mathcal{R}}
\newcolumntype{L}[1]{>{\raggedright\let\newline\\\arraybackslash\hspace{0pt}}m{#1}}
\newcolumntype{C}[1]{>{\centering\let\newline\\\arraybackslash\hspace{0pt}}m{#1}}
\newcolumntype{R}[1]{>{\raggedleft\let\newline\\\arraybackslash\hspace{0pt}}m{#1}}
\begin{document}
	%
	\title{TipTop: (Almost) Exact Solutions for Influence Maximization in Billion-scale Networks}

    
\author{Xiang Li, J.~David~Smith, Thang N. Dinh, and~My~T.~Thai%
\thanks{Xiang Li is with the Department of Computer Engineering, Santa Clara University, Santa Clara, CA, 95053 USA. Email: xli8@scu.edu

J. David Smith and My T. Thai are with the Computer \& Information Science \& Engineering Department, University of Florida, Gainesville, FL, 32601 USA. E-mail: \{jdsmith, mythai\}@cise.ufl.edu

Thang N. Dinh is with the Computer Science Department, Virginia Commonwealth University, Richmond, VA 23284. Email: tndinh@vcu.edu



The preliminary version of this paper has appeared in \cite{li2017approximate}.
}
}
\maketitle

\begin{abstract}
In this paper, we study the Cost-aware Target Viral Marketing (CTVM{}) problem, a generalization of Influence Maximization (\IM{}). \CTVM{} asks for the most cost-effective users to influence the most relevant users. In contrast to the vast literature, we attempt to offer exact solutions. As the problem is NP-hard, thus, exact solutions are intractable, we propose \ISSA{}, a $(1-\epsilon)$-optimal solution for arbitrary $\epsilon>0$ that scales to very large networks such as Twitter. At the heart of \ISSA{} lies an innovative technique that reduces the number of samples as much as possible. This allows us to exactly solve \CTVM{} on a much smaller space of generated samples using Integer Programming. Furthermore, \ISSA{} lends a tool for researchers to benchmark their solutions against the optimal one in large-scale networks, which is currently not available.

\end{abstract}

\begin{IEEEkeywords}
Viral Marketing; Influence Maximization; Algorithms; Online Social Networks; Optimization
\end{IEEEkeywords}

	\IEEEpeerreviewmaketitle

	\setlength{\intextsep}{0.5\baselineskip}

	\parskip 5pt

	\captionsetup{labelfont=bf}

	\vspace{-0.1in}
\section{Introduction}

With the recent development, Online Social Networks (OSNs) have become one of the most effective platforms for marketing and advertising. Through ``word-of-mouth'' exchanges, so-called viral marketing, the influence and product adoption can spread from few key users to billions of users in the network. To identify these key users, {Influence Maximization} (\IM{}) problem, which asks for a set of $k$ seed users that maximizes the expected number of influenced nodes, has been studied extensively \cite{Leskovec07, Chen10, Goyal11, Tang14, Tang15, Nguyen163} (and references therein). Taking into account both arbitrary cost for selecting a node and arbitrary benefit for influencing a node, a generalized problem, {Cost-aware Targeted Viral Marketing} (\CTVM{}), has been introduced recently \cite{Nguyen161}. Given a budget $\kappa$, \CTVM{} asks to find a seed set $S$ with the total cost at most $\kappa$ such as to maximize the expected total benefit over the influenced nodes.

Despite a great amount of works \cite{Kempe03, Leskovec07, Chen10, Goyal11, Tang14, Tang15, Nguyen163, Nguyen161, Borgs14,Nguyen2013}, none of these attempts to solve \IM{} or \CTVM{} exactly. The lack of such a solution makes it challenging to evaluate the performance of existing solutions, such as IMM \cite{Tang15} and SSA \cite{Nguyen163} algorithms, in real-world datasets, against optimal solutions. Despite the fact that these algorithms have a theoretical performance guarantee of $(1-1/e-\epsilon)$ in the worst case, one can always ask: how well do these algorithms actually perform on the billion-scale OSNs? This question has remained unanswered til now.

Obtaining exact solutions to \CTVM{} (and thus to \IM{}) indeed is very challenging. Due to the nature of the problem, stochastic programming is a viable approach for optimization under uncertainty when the probability distribution governs the data is given \cite{Shapiro14}. However traditional stochastic programming-based solutions to various problems in NP-hard class are only for small networks with a few hundreds nodes \cite{Shapiro14}. Thus directly applying existing techniques to \IM{} and \CTVM{} is not suitable because OSNs consist of millions of users and billions of edges. Furthermore, the theory developed to assess the solution quality such as those in \cite{Bayraksan06, Shapiro14}, and the references therein, only provide approximate confidence interval. Therefore, sufficiently large samples are needed to justify the quality assessment. This requires us to develop novel stochastic programming techniques to optimally solve \CTVM{}.

In this paper, we provide the first (almost) exact solutions for \CTVM{} with an approximation ratio of $(1 - \epsilon)$. To tackle the above challenges, we develop two innovative techniques: 1) Reduce the number of samples as much as possible so that the stochastic programming can be solved in a short time. This requires us to tightly bound the number of samples needed to generate a candidate solution. 2) Develop novel computational method to assess the solution quality with just enough samples, where the quality requirement is given a priori. These results cross the barriers in stochastic programming theory where solving stochastic programming on large-scale networks had been thought impractical. Our contributions are summarized as follows:
	\begin{itemize}
		\item Design two exact solutions, namely T-EXACT and E-EXACT,  to \CTVM{} using two-stage stochastic programming which utilizes the sample average approximation method to reduce the number of realizations. Using T-EXACT and E-EXACT, we illustrate that traditional stochastic programming techniques badly suffer the scalability issue. 
        \item Develop an (almost) optimal algorithm to \CTVM{} with a performance ratio of $(1 - \epsilon)$: The Tiny Integer Program with Theoretically OPtimal results (\ISSA{}). Being able to obtain the optimal solution, \ISSA{} is used as a benchmark to evaluate the absolute performance of existing solutions in billion-scale OSNs.
        \item Conduct extensive experiments confirming that the theoretical performance of \ISSA{} is attained in practice. Our experiments show that it is feasible to compute 98\% optimal solutions to \CTVM{} on billion-scale OSNs. These experiments confirm that our sampling reductions are significant in practice, by a factor of $10^3$ on average.
	\end{itemize}

{\bf Organization}. Section \ref{sec:rel} briefly discusses the related work and the Reverse Influence Sampling (RIS). In Section \ref{sec:model}, we present the network model, propagation models, and the problem definition. Section \ref{sec:stoc} presents our EXACT algorithms for \CTVM. Our main contribution, \ISSA{} is introduced in Section \ref{sec:tiptop}. We analyze \ISSA{} approximation factor in Section \ref{sec:complexity}. Experimental results on real social networks are shown in Section \ref{sec:exp}. And finally Section VIII concludes the paper.
	

\section{Related work and Reverse Influence Sampling}\label{sec:rel}

{\bf Influence Maximization.}	Kempe {\em et al.} \cite{Kempe03} formulated viral marketing as the \IM{} optimization problem, focused on two fundamental cascade models, Linear Threshold (LT) and Independent Cascade (IC) models. They showed the problem to be NP-complete and devised an $(1-1/e - \epsilon)$\ approximation algorithm.  In addition, \IM{} cannot be approximated within a factor $(1-\frac{1}{e}+\epsilon)$ \cite{Feige98} under a typical complexity assumption. Computing the exact influence is shown to be \#P-hard \cite{Chen10}.
    
    Following \cite{Kempe03}, a series of work have been proposed, focused on improving the time complexity \cite{Leskovec07, Chen10, Goyal11, Goyal11_celf++,Cohen14, dinh2014approximability, Ohsaka14, Tang14, HeJing}. All of these work retain the ratio of $(1-1/e - \epsilon)$.
    Their major bottle-neck is the inefficiency in estimating the influence spread, thus restraining them from being able to run on large networks.

    
{\bf Reverse Influence Sampling (RIS).} Borgs {\em et al.} \cite{Borgs14} have introduced a novel sampling approach, RIS, which is a foundation for the later works. Briefly, RIS captures the influence landscape of $G = (V, E, p)$ through generating a hypergraph $\mathcal H=(V, \{ \mathcal E_1, \mathcal E_2,\ldots\})$. Each hyperedge $\mathcal E_j \in \mathcal H$ is a subset of nodes in $V$ and constructed as follows: 1) selecting a random node $v \in V$ 2) generating a sample graph $g \sqsubseteq G$ and 3) returning $\mathcal E_j$ as the set of nodes that can reach $v$ in $g$. Observe that  $\mathcal E_j$ contains the nodes that can influence its source $v$. If we generate multiple random hyperedges, influential nodes will likely appear more often in the hyperedges. Thus a seed set $S$ that \emph{covers} most of the hyperedges will likely maximize the influence spread. Here $S$ covers a hyperedge $\mathcal E_j$, if $S \cap \mathcal E_j \neq \emptyset$. Therefore, \IM{} can be solved using the following framework. 1) Generate multiple random hyperedges from $G$. 2) Use the greedy algorithm for the Max-coverage problem \cite{Khuller1999} to find $S$ that covers the maximum number of hyperedges and return $S$ as the solution. The core issue in applying the above framework is that: How many hyperedges are sufficient to provide a good approximation solution?

    Based on RIS, Borgs {\em et al.} \cite{Borgs14} presented an $O(kl^2(m+n)\log^2 n/\epsilon^3)$ time algorithm for \IM{} under IC model. It returns a $(1-1/e - \epsilon)$-approximate ratio with probability at least $1 - n^{-l}$. In practice, the proposed algorithm is, however, less than satisfactory due to the rather large hidden constants. In a sequential work, Tang et al. \cite{Tang15} reduced the running time to $O( (k+l) (m+n) \log n/\epsilon^2)$ and showed that their algorithm is efficient in billion-scale networks. Nguyen et al. \cite{Nguyen163} proposed SSA/DSSA algorithms to further reduce the running time up to orders of magnitudes. The algorithm keeps generating samples and stops at exponential check points to verify (stare) if there is adequate statistical evidence on the solution quality for termination. Huang et al. showed gaps  in SSA/D-SSA \cite{huang2017revisiting} and 
    propose the fixes for SSA. Independently, the authors of SSA/D-SSA provided the fixes for both SSA and D-SSA in \cite{sigmodarxiv16}  with the summary of changes in \cite{nguyen2018revisiting}. However, SSA does not put an effort in minimizing the number of samples at the stopping point to verify the candidate solution, which is needed to solve the Integer Programming (IP). All of these works have a $(1-1/e - \epsilon)$-approximation ratio.

     
 {\bf Generalization.}  In generalizing \IM{}, Nguyen and Zheng \cite{Nguyen2013} investigated the \BIM{} problem in which each node can have an arbitrary selecting cost. They proposed a $(1-1/\sqrt{e} -\epsilon)$ approximation algorithm (called BIM) based on a greedy algorithm for Budgeted Max-Coverage in \cite{Khuller1999} and two heuristics. However, none of the proposed algorithms can handle billion-scale networks. Recently, Nguyen {\it et. al} introduced \CTVM{} and presented a scalable $(1-1/\sqrt{e} -\epsilon)$ algorithm \cite{Nguyen161}. They also showed that straightforward adaption of the methods in \cite{Borgs14, Tang14, Tang15} for \CTVM{} can incur an excessive number of samples, thus, are not efficient enough for large networks.
	\section{Models and Problem Definitions}\label{sec:model}

Let $\mathcal G=(V, E, c, b, \mathcal D)$ be a network with a node set $V$ and a directed edge set $E$, with $|V| =n$  and $|E|=m$. Each node $u \in V$ has a selecting cost $c_u\geq 0$, also written $c(u)$, and a benefit $b(u)$ if $u$ is influenced\footnote{The cost of node $u$, $c_u$, can be estimated proportionally to the centrality of $u$ (how important the respective person is), e.g., out-degree of $u$ \cite{Nguyen2013}. Additionally, the node benefit $b(u)$ refers to the gain of influencing node $u$, e.g., 1 for each node in our targeted group and 0 outside \cite{Chen2015}.}. Each directed edge $(u, v) \in E$ is associated with an influence probability $p_{uv} \in [0, 1]$.

The diffusion of information in $\mathcal G$ is captured through a probability space $\mathcal D$ that associates each possible cascade graph $g=(V, E_s), E_s \subseteq E$ with a probability. Each cascade graph $g$ is also called  a \emph{realization} or a \emph{sample graph} of $\mathcal G$. An edge $(u, v)\in E_s$ in $g$ implies that $u$ can activate $v$ in that graph. 

Given a seed set $S \subset V$, the number of nodes get influenced by $S$ within a sample graph $g$ is defined as the number of nodes reachable from $S$ in $g$, and denoted by $R(g, S)$. The \emph{influence spread} of $S$ in $\mathcal{G}$ is, similarly, defined as the expected influence of $S$ over all possible realization graphs in $\mathcal D$. Mathematically, define $\mathcal D = (\Omega, \mathcal F, P)$ where $\Omega = \{ g=(V, E_s) | E_s \subseteq E \}$ is the set of all possible graph samples of $\mathcal G$, $\mathcal F = 2^\Omega$, and $P: \mathcal F \rightarrow [0, 1]$, a probability measure. 

Let $G$ denote a \emph{random graph} defined over $\mathcal D$. The influence spread of the seed set $S$ is
\begin{align}
    	\I(S) = \E[ R(G, S)] =  \sum_{g  \in \Omega } P(g) |R(g, S)|,
\end{align}	
where $R(g, S)$ denotes the set of nodes reachable from $S$ within $g$.

	 For example, we consider  the popular Independent Cascade (IC) model \cite{Kempe03}. In IC, the influence propagation happens in round $t=1, 2, 3, \ldots$. At round 1, nodes in $S$ are \emph{activated} and the other nodes are \emph{inactive}. The cost of activating $S$ is given $c(S) = \sum_{u \in S} c_u$. At round $t>1$, each newly activated node $u$ will independently activate its neighbor $v$ with a probability $p_{uv}$. Once a node becomes activated, it remains activated in all subsequent rounds. The influence propagation stops when no more nodes are activated.
    
    For IC, the probability mass function for each sample graph  $g=(V, E_s)$ is
\begin{align*}
 P(g) = \textsf{Pr}[G= g] = \displaystyle\prod_{e \in E_s}p_e \displaystyle\prod_{e \in E\setminus E_s}(1-p_e).
  \vspace{-7pt}
\end{align*}

	Similarly, the \emph{benefit} of $S$ is defined as the expected total benefit over all influenced nodes, i.e., 
	\begin{align}
		\B(S) = \E[B(G, S)] = \sum_{g  \in  \Omega} P(g) B(g, S), 
	\end{align}
where $B(g, S) = \sum_{u \in R(g, S)} b(u)$ is the benefit of selecting $S$ with respect to (w.r.t.) graph sample $g$. 

Without loss of generality (w.l.o.g.), we assume that the benefit of the nodes are normalized so that: 
\begin{align}
\label{eq:sumb}
\sigma_B = \sum_{u \in V} b(u) = n.
\end{align}
This is compatible with the uniform benefit case in the IM problem in which each node has a same benefit one and $B(g, S) = R(g, S)$.

	\noindent We are now ready to define the \CTVM{} problem as follows.
	
	\begin{definition}[Cost-aware Targeted Viral Marketing - \CTVM{}]
 		Given  a graph and its diffusion model $\mathcal G=(V,E, c, b, \mathcal D)$  and a budget $\kappa>0$, find a seed set $S \subset V$ with the total cost $c(S) \leq \kappa$ to maximize the  benefit $\B(S)$.
	\end{definition}

\vspace{-0.1in}
\section{Multi-stage Programming Formulations}\label{sec:stoc}
In this section, we present two variants of EXACT solutions, T-EXACT and E-EXACT and their corresponding sample average approximation T-SAA and E-SAA, respectively. T-EXACT is based on two-stage stochastic programming while E-EXACT is the edge-based formulation with cycle-elimination. We then discuss the scalability issue of traditional stochastic programming which is later verified in our experiment, shown in Section \ref{sec:exp}. 

\vspace{-0.1in}
\subsection{Two-stage Stochastic Linear Program (T-EXACT)}
\label{subsec:formula} 

Given an instance  $\mathcal G=(V,E, c, b, p)$ of \CTVM{}, we first use integer variables $s_v, v \in V$ 
 to represent whether or not node $v$ is selected as a seed node. That is, $s_v = 1$ if $v$ is selected; otherwise $s_v = 0$.
  
Variables $s$ are known as first stage variables. The values of $s$ are to be decided before the actual realization of the uncertain parameters in  $G$. Further, the set of selected nodes need to satisfy the budget constraint 
$
\sum_{v \in V} s_v c_v  \leq \kappa.
$ 


Given a random graph $G$, we define variable $x_{v}, v \in V,$ to be the activation state of node $v$ when the propagation stops. That is, $x_v = 1$ if $v$ is eventually activated; otherwise $x_v = 0$.

The benefit obtained by seed set can be computed using a second stage mixed integer programming, denoted by $B(s, x, G)$ as follows. 
\begin{align}      
  \label{st:obj} B(s, x, G)  =\max& \displaystyle\sum_{v\in V}  b(v) x_v \\       
\mbox{s. t.\ }   &\label{st:seed} \sum_{u \in IR(G, v)} s_{u} \geq x_v,\   v \in V,\\
&\label{st:range}s_u \in \{0,1\}, x_{v} \in [0, 1]
\end{align}
where $IR(G, v)$ denotes the set of nodes $u$ so that there exists a path from $u$ to $v$  in $G$.


The two-stage stochastic linear formulation for the CTVM problem is as follows.
 \begin{align}   
    \max_{s \in \{0, 1\}^n }\ \ \,  &    \mathbb{E}\left[ B(s, x, G) \right] \\    
     \mbox{s. t.\,} & \sum_{v\in V} s_v c_v  \leq \kappa  \\
     &\mbox{where }B(s, x, G) \text{ is given in (\ref{st:obj})-(\ref{st:range}) }
\end{align}

The objective is to maximize the expected benefit of the activated nodes $\mathbb{E}\left[ B(s, x, G) \right]$, where $B(s, x, G)$ is the optimal value of the second-stage problem. This stochastic programming problem is, however, not yet ready to be solved with a linear algebra solver.

\subsubsection{Discretization} To solve a two-stage stochastic problem, one often needs to discretize the problem into a single (very large) linear programming problem.  That is we need to consider all possible realizations $g \in \Omega$ and their probability masses $\textsf{Pr}[G=g]$. The  two-stage stochastic program can be discretized into a mixed integer programming, denoted by MIP$_F$ as follows. 

\begin{small}
\begin{align}   
    \label{st2:obj}\max\,  &  \sum_{g \in \Omega}  \textsf{Pr}[G=g]\sum_{v} b(v)x_v^l \\    
\mbox{s. t.\ }  &\label{st2:seed}  \sum_{v\in V} s_v c_v  \leq \kappa  \\
&\label{st2:activation}\sum_{u \in IR(g, v)} s_{u} \geq x_v^g,\   v \in V, g \in  \Omega \\
&\label{st2:range}s_u \in \{0,1\}, x_{v}^g \in [0, 1]
\end{align}
\end{small}
\vspace{-0.15in}
\subsection{Sample Average Approximation T-SAA}
An approach to reduce the number of realizations in T-EXACT is to apply the Sample Average Approximation (SAA) method. In that method, we generate independently $T$ graph samples $G^1,G^2,\cdots,G^T$ from $\mathcal D$. 

The expectation objective $q(s)=\mathbb E[B(s, x, G)]$ is then approximated by the sample average $\hat{q}_T(x) = \frac{1}{T} \sum_{l=1}^T \sum_{v}(b(v)x_{v}^l)$, and the new formulation is then
\begin{align}
\label{eq:texactob}\max\quad\quad\quad & \frac{1}{T} \sum_{l=1}^T  \sum_{v}b(v)x_{v}^l & \\ 
\nonumber \text{s. t.} \quad\quad\quad& \mbox{Constraints } (\ref{st2:seed})-(\ref{st2:range}),
\end{align}
where $x^l_v \in [0, 1]$ is an abbreviation for $x^{G^l}_v$.

We shall refer to the above mixed integer linear programming as T-SAA.


We start with identifying the number of samples $T$ needed to guarantee an $\epsilon$ error, followed by the expected (exponential) time complexity to solve the above mixed integer linear programming with $T$ samples.

\subsubsection{Sample Complexity} We bound the concentration with Hoeffding's inequality

\begin{lemma}[Hoeffding's inequality]
Let $X_1,\ldots,X_n$ be independent random variables in $[0, 1]$.  Let $\bar{X} = \frac{1}{n}\sum_{i=1}^n X_i$. Then, we have $\Pr[|\bar{X} - X| > t] \leq 2 \exp{\left(-2nt^2\right)}$.
\end{lemma}

Let $\mathcal C = \{ S \subseteq V : \sum_{v\in S} {s_v c_v} \leq \kappa \}$ be the set of all candidate seed sets.  In the worst-case, $\mathcal C$ can contains exponentially many candidates. 

The following lemma gives a bound on the number of necessary samples to guarantee an $\epsilon \sigma_B$ \emph{additive error}.
\begin{lemma}
\label{lem:texactsamp}
	For fixed $T= \Omega(\frac{1}{\epsilon^2} \log \frac{2}{\delta} \log |\mathcal C|)$,  we have     
    \[
    \Pr[|\bar B(S) - \B(S)| > \epsilon \sigma_B] \leq \delta,
    \]
\end{lemma}
\begin{proof}
For a fixed candidate solution $S \in \mathcal C$, apply the Hoeffding's inequality on $\frac{1}{\sigma_B} B(G^1, S), \frac{1}{\sigma_B} B(G^2, S),\ldots, \frac{1}{\sigma_B} B(G^T, S)$, we have
\[
	\Pr[|\bar B(S) - \B(S)| > \epsilon \sigma_B] \leq 2 \exp\left(-2\epsilon^2  T\right).
\]
By choosing $T=\Omega\left(\frac{1}{\epsilon^2} \log \frac{2}{\delta} \log|\mathcal C|\right)$ and taking the union bound over all candidate solutions in $\mathcal C$, we obtain the desired error bound.
\end{proof}

In the worst-case, $\mathcal C = 2^V$, the powerset of $V$, thus, $|\mathcal C| = 2^n$ and $T=O(n 1/\epsilon^2)$.

The estimation on $T$ maybe too conservative for practical estimates. While it provides some evidence on the convergence of the solution, it is excessively large for practical purposes.
\subsubsection{Time complexity} We analyze the time complexity of T-SAA in Eq. \ref{eq:texactob}, assuming an exhaustive search on $0-1$ integer variables $s_v, v\in V$. While the branch-and-cut (and other mixed integer linear programming methods) performs much better in practice, it has the same worst-case time complexity. 

For each of the $2^n$ possible assignments of $s_v$, we need to solve a remaining linear programming of $n T$ random variables $x^l_v$ of size, measured by the number of non-zeros, $M$. The expectation of $M$ depends on $T$ and the expected influence of nodes in $G$ as characterized in the following lemma.

\begin{lemma}
\label{lem:tsize}
The expected size, measured by the number of non-zeros, of T-SAA is $\E[M] = T \times \sum_{u \in V} I(u)$, where $I(u)$ denotes the expected influence of node $u\in V$. 
\end{lemma}
\begin{proof}
For each $u \in V$, the number of constraints (\ref{st2:activation}) that $u$ was on the left hand sides equal the number of nodes that $u$ can reach to. Thus, the expected number of constraints (\ref{st2:activation}) that $u$ participates into is, hence, $I(u)$. Taking the sum over all possible $u \in V$, we have the expected size of the T-EXACT with $T$ graph realizations is $T \times \sum_{u \in V} I(u)$.
\end{proof}

In other words, for each $G^l$, the size of T-SAA increases by the size of transitive closure in $G^l$, i.e., the number of pairs $(u, v)$ that $u$ can reach to $v$. For many subgraphs, this increase is of $O(n^2)$, making T-SAA bloats rapidly as $T$ increases.

To bound the time complexity of solving the LP, we use the following time bound on Karmarkar's algorithm 
\begin{theorem}{\cite{karmarkar1984new}}
\label{theo:karmarkar}
Denoting by N the number of variables and L the number of bits of input in a linear programming, then the runtime of Karmarkar's algorithm is 
\[
O(N^{{3.5}}L^{2}\cdot \log L\cdot \log \log L).
\]
\end{theorem}

Substitute $N = nT=O(\frac{1}{\epsilon^2 }n^2)$ and $L=O(M) = O(T \times \sum_{u \in V} I(u))=O(\frac{1}{\epsilon^2} n^3)$ from Lemma~\ref{lem:tsize}, we obtain the approximate time to solve T-SAA as
\[
O\left( 2^n  n^7 1/{\epsilon^7} 1/{\epsilon^4} n^6 polylog(n)   \right)
\]
Simplify and we get the following result
\begin{lemma}
The worst-case time complexity of T-SAA is
$O\left(2^n n^{13} \frac{1}{\epsilon^{11}} polylog(n) \right)$.
\end{lemma}

\vspace{-0.1in}
\subsection{Edge-based formulation with cycle-elimination (E-EXACT)}\label{sec:edge-ilp}
We provide an alternative formula in which for each random graph $G$, the size of the mixed integer linear program increases by $O(|E|)$ rather than $O(n^2)$ as in T-EXACT.

It is important that we formulate the second stage as a maximum problem, so that we will obtain  a bi-level $\mathbf{Max-max}$ optimization. The alternative $Max-min$ formulation is not only more sophisticated but also constrained to small size instances in practice. 

The main idea is \emph{to build a cascade tree from the seed nodes and count the number of activated nodes} instead of counting the number of activated nodes like in T-EXACT.
Given a random graph $G = (V, E)$, for each $(u, v) \in E$ we define a variable 
$
y_{uv}=
\left\{
\begin{array}{rl}
1& \text{if the edge $(u, v)$ is ``active'',}\\
0& \text{otherwise.}
\end{array}
\right.
$

 To ``build'' a cascade tree, we constraint that each node $v$ in $V$ has at most one active edge $(u, v)$ going to $v$, as shown in Eq.~(\ref{ipcycle:tree}).
 In addition, $(u, v)$ is active if and only if a) $(u, v)$ is an edge on the graph realization and b) either $u$ is selected, i.e., $s_u = 1$, or there exists active edge $(w, u)$ going to $u$, see Eq.~(\ref{ipcycle:seed}). Moreover, we forbid cycles composed of all active edges. 
This is similar to the sub-tour elimination for the TSP problem \cite{Dantzig54solutionof}. There might be an exponential number of cycles, however, the cycle can be added gradually. In each step, we identify a cycle of which constraint is violated and add the constraint to the programming formulation. Given a fractional solution $(s; y)$, an exact separation algorithm for some class of inequalities either finds a member of the class violated by $(s; y)$ or proves that no such member
exists. There is an exact algorithm for the separation procedure based on finding the shortest path as follow.

Let $z_{uv} = 1 - y_{uv}$, the constraint (\ref{ipcycle:cycle}) can be rewritten as $\sum_{(u, v) \in C} z_{uv} \geq 1$
Thus we can find violated constraint by looking for the smallest length cycles in the graph with edges' lengths $z_{uv}$. In that graph, each edge $(u, v)$ with $z_{u,v} + d(u, v) < 1$, where $d(u,v)$ denotes the shortest distance between $u$ and $v$, will correspond to an violated constraint. The major time complexity in finding the violated cycles is on finding all-pair-shortest paths which can be solved in $O(n^2 \log n + n m)$ using the Johnson's algorithm.

Since each activated node $u$ is either already in the seed set or activated by exactly one neighbor in the built cascade tree, the objective and the complete formulation is then as follows.
\begin{align}      
 \nonumber B_E(s, x, G)  &=\\
  \label{ipcycle:obj} \max& \displaystyle\sum_{ (u, v)\in E} b(v) y_{uv} +\sum_{u\in V} b(u)s_u\\       
\mbox{s. t.\ }   &\label{ipcycle:seed} \sum_{w \in N^-(u)} y_{wu}+ s_{u} \geq y_{uv},\   (u, v) \in E\\
&\label{ipcycle:tree} \sum_{u \in N^-(v)} y_{uv}+s_v \leq 1 ,\  v\in V \\
&\label{ipcycle:cycle} \sum_{(u,v) \in C} y_{uv} \leq |C|-1 ,\ \text{ any cycle } C\\
&s_u \in \{0,1\},\  u \in V,\\
&\label{ipcycle:range} y_{uv} \in [0, 1],\  (u, v) \in E
\end{align}

The two-stage stochastic linear formulation for the CTVM problem is as follows.
 \begin{align}   
    \max_{s \in \{0, 1\}^n }\ \ \,  &    \mathbb{E}\left[ B_E(s, x, G) \right] \\    
     \mbox{s. t.\,} & \sum_{v\in V} s_v c_v  \leq \kappa  \\
     &\mbox{where }B_E(s, x, G) \text{ is given in (\ref{ipcycle:obj})-(\ref{ipcycle:range}) }
  \end{align}

 The following lemma proves the one-to-one mapping between activated nodes (not in the seed) and the active edges in the cascade tree.

\begin{lemma}
The number of activated nodes will be the sum of the number of active edges plus the number of seed nodes.
\end{lemma}
\begin{proof}
Define $A_E = \{ (u, v) | y_{uv} = 1\}$ the set of active edges and
$A_V = S \cup T$, where $S=\{ u | s_u = 1\}$ and  $T = \{v \text{ reachable from some } u\in S \text{ via a path of only active edges}\}$. We need to show that 
$
|A_V| = |S| + |A_E|
$
or equivalently we need to show
\[
|A_E| = |T|.
\]
The constraints (19) guarantee that each node $v \in V$ has at most one incoming active edge. The constraints (20) forbid cycles to form among the active edges, thus each active edges will point to exactly one active node that is not in $S$ (otherwise we would have a cycle). Thus, we have an one-to-one mapping between the active edges and the active nodes that are not in $S$, i.e., $|A_E| = |T|$.
\end{proof}
\subsection{Sample Average Approximation E-SAA}
Similarly, we construct a Sample Average Approximation, called E-SAA. Given $T$ graph samples, $G^1,G^2,\cdots,G^T$ from $\mathcal D$, we have
\begin{align}
\label{esaa:obj} \max & \frac{1}{T}\sum_{l=1}^T \displaystyle\sum_{ (u, v)\in E} b(v) y_{uv}^l +\sum_{u\in V} b(u)s_u\\       
     \mbox{s. t.\,} & \sum_{v\in V} s_v c_v  \leq \kappa  \\
    &\text{Constraints (\ref{ipcycle:seed})-(\ref{ipcycle:range}) for } G^l, l=1..T
\end{align}

We shall refer to the above mixed integer linear programming as E-SAA.

\subsubsection{Sample complexity and Time complexity of E-SAA}

Similar to that in Lemma~\ref{lem:texactsamp}, we can obtain the same bound on the number of samples.
\begin{lemma}
	For fixed $T= \Omega(\frac{1}{\epsilon^2} \log \frac{2}{\delta} \log |\mathcal C|)$,  we have     
    \[
    \Pr[|\bar B_E(S) - \B(S)| > \epsilon \sigma_B] \leq \delta,
    \]
\end{lemma}

Since the number of sub-tour elimination constraints in Eq.~(\ref{ipcycle:cycle}) can be exponentially many, in theory the worst-case complexity of E-SAA is much worse than that in T-SAA. However, in practice, those constraints are added gradually and potentially lead to a more overall efficient formulation.

\vspace{-0.1in}
\subsection{Scalability Issues of Traditional Stochastic Optimization}
While both T-EXACT and E-EXACT (called EXACT for short) are designed based on a standard method for stochastic programming, traditional methods can only be applied for small networks, up to few hundreds nodes \cite{Shapiro14}.

There are three major scalability issues when applying SAA and using EXACT for the influence maximization problem.  First, the samples have a large size $O(m)$. For large networks, $m$ could be of size million or billion. As a consequence,  we can only have a small number of samples, sacrificing the solution quality. For billion scale networks, even one sample will let to an extremely large ILP, that exceeds the capability of the best solvers. Second, the theory developed to assess the solution quality such as those in \cite{Bayraksan06, Shapiro14}, only provide approximate confidence interval. That is the quality assessment is only justified for sufficiently large samples and may not hold for small sample sizes. And third, most existing solution quality assessment methods \cite{Bayraksan06, Shapiro14} only provide the assessment for a given number of sample size. Thus, if the quality requirement is given a priori, e.g., $(\epsilon, \delta)$ approximation, there is not an efficient algorithmic framework to identify the number of necessary samples.

\vspace{-0.1in}
\section{\ISSA{} - An Efficient $(1-\epsilon)$-optimal Solution}\label{sec:tiptop}

In this section, we introduce our main contribution \ISSA{}, which is the first algorithm that can return a $(1 - \epsilon)$-approximation ratio w.h.p of $(1 - \delta)$ where $\delta$ is given a priori. It overcomes the above mentioned scalability issues and can run on billion-scale networks.

\subsection{\ISSA{} Algorithm Overview}
For readability, we start with the solution to \CTVM{} in which all nodes have uniform cost. Let $\kappa = k$, we want to find $S$ with $|S| \le k$ so as to maximize the benefit $\mathbb{B}(S)$. Note that benefit function is still heterogeneous. The solution to non-uniform cost is presented later in Subsection \ref{sub:ctvm}.

At a high level, \ISSA{} first generates a collection $\mathcal{R}$ of random hyperedges sets which serves as the searching space to find a candidate solution $\hat S_k$ to \CTVM{}. It next calls the \Verify{} procedure which independently generates another collection of random hyperedges sets to closely estimate the objective function's value of the candidate solution. If this value is not close enough to the optimal solution, \ISSA{} generates more samples by calling the \AddSamples{} procedure to enlarge the search space, and thus finding another better candidate solution. When the objective function's value of the candidate solution is close enough to the optimal one, \ISSA{} halts and returns the found solution. The pseudo-code of \ISSA{} is presented in Alg. \ref{algo:issa}.

\begin{algorithm}\small
	\caption{\ISSA{} Algorithm}
	\label{algo:issa}
	\begin{algorithmic}
		\State {\textbf{Input}: Graph $G=(V, E, b, c, w)$, seed set size $k >0$, and $\epsilon, \delta \in (0, 1)$.}
		\State {\textbf{Output}: Seed set $S_k$.}
        \State {1: \text{ } $\Lambda \leftarrow (1 + \epsilon)(2+\frac{2}{3}\epsilon)\frac{1}{\epsilon^2} \ln \frac{2}{\delta}$ }
        \State {2: \text{ } $t \leftarrow 1; t_{max} = \lceil 2\ln n/\epsilon \rceil; v_{max} \leftarrow 6$ }
        \State {3: \text{ } $\Lambda_{max} \leftarrow (1 + \epsilon)(2+\frac{2}{3}\epsilon)\frac{2}{\epsilon^2}(\ln \frac{2}{\delta/4}+ \ln {n \choose k})$}
		\State {4: \text{ } \text{Generate random $\Lambda$ hyperedges sets $R_1, R_2, \ldots$ using BSA \cite{Nguyen161}}}
		\State {5: \text{ }\textbf{repeat}}
        \State {6: \text{ }\text{ \ \ } $N_t \leftarrow \Lambda \times e^{\epsilon t}; \mathcal R_t \leftarrow \{R_1, R_2,\ldots,R_{N_t} \}$ }
   		\State {7: \text{ }\text{ \ \ } $\hat S_k\leftarrow \text{ILP}_{MC}(\mathcal R_t, c, k)$}	        
        \State {8: \text{ }\text{ \ \ } $<passed, \epsilon_1> \leftarrow  $\Verify$(\hat S_k,v_{max}, \epsilon, t_{max}, 2^{v_{max}} N_t)$}       
        \State {9: \text{ }\text{ \ \ } \textbf{if} (not $passed$) and ($Cov_{\R}(\hat S_k) \leq \Lambda_{max}$) \textbf{then}\\ 
        \text{ }\text{ \ \ \  \ \ \ \ \  \ } $t \leftarrow$ \AddSamples$(t, \epsilon, \epsilon_1)$}
        \State {10: \text{ }\textbf{until} $passed$ or $Cov_{\R}(\hat S_k) > \Lambda_{max}$}
		\State {11: \textbf{return} $\hat S_k$}
	\end{algorithmic}
\end{algorithm}

As \CTVM{} considers arbitrary benefits, we utilize Benefit Sampling Algorithm -- \BRIS{} in \cite{Nguyen161} to embed the benefit of each node into consideration, shown in line 4 of Algorithm \ref{algo:issa}.  \BRIS{}  performs a reversed influence sampling (\textsf{RIS}) in which \emph{the probability of a node chosen as the source is proportional to its benefit}. Random hyperedges generated via \BRIS{} can capture the ``benefit landscape''. That is they can be used to estimate the benefit of any seed set $S$ as stated in the following lemma.

\begin{lemma}\cite{Nguyen161}
	\label{lem:ris}
	Given a fixed seed set $S \subseteq V$, and let $R_1, R_2,\ldots, R_j,\ldots$ be random hyperedges sets generated using Benefit Sampling Algorithm \cite{Nguyen161}, define random variables
    \begin{align}
    	\label{def:randvar}
		Z_j = \left\{ \begin{array}{ll}
			1 & \text{ if } R_j \cap S \neq \emptyset, \\
			0 & \text{ otherwise}.
		\end{array}
		\right.
	\end{align}
then
	\begin{align}
	\E[Z_j] = \Pr[ R_j \cap S \neq \emptyset ] = \frac{\B(S)}{\Gamma}
	\end{align}
where $\Gamma = \sum_{v \in V} b(v)$ is the total nodes' benefit.
\end{lemma}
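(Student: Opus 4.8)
The plan is to unfold $\E[Z_j]$ into a probability over the two independent sources of randomness in the Benefit Sampling Algorithm---the choice of the source node and the draw of the sample graph---and then to recognize the resulting double sum as exactly $\B(S)/\Gamma$. Since $Z_j$ is a $\{0,1\}$ indicator, the first equality $\E[Z_j] = \Pr[R_j \cap S \neq \emptyset]$ is immediate, so the entire content lies in the second equality.

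First I would isolate the single combinatorial fact that drives everything. By construction, once \BRIS{} has fixed a source $v$ and a realization $g \sqsubseteq G$, the set $R_j$ is precisely the set of nodes that can reach $v$ in $g$. Hence the event $R_j \cap S \neq \emptyset$ holds if and only if \emph{some} node of $S$ can reach $v$ in $g$, which is exactly the statement that $v$ is reachable from $S$ in $g$, i.e. $v \in R(g,S)$. This equivalence converts an event about the reverse-reachable set into an event about the chosen source being influenced by $S$.

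Next I would condition on the two independent random choices. Writing \BRIS{} as first picking a source $v$ with probability $b(v)/\Gamma$ and then drawing $g \sqsubseteq G$ with probability $\Pr[g]$, and using the equivalence above, I get
\begin{align*}
\Pr[R_j \cap S \neq \emptyset] = \sum_{v \in V} \frac{b(v)}{\Gamma} \sum_{g \sqsubseteq G} \Pr[g]\, \mathbf{1}[v \in R(g,S)].
\end{align*}
Pulling out $1/\Gamma$ and exchanging the two finite nonnegative sums, the inner sum collapses via $\sum_{v \in V} b(v)\, \mathbf{1}[v \in R(g,S)] = \sum_{v \in R(g,S)} b(v)$, which leaves
\begin{align*}
\Pr[R_j \cap S \neq \emptyset] = \frac{1}{\Gamma} \sum_{g \sqsubseteq G} \Pr[g] \sum_{v \in R(g,S)} b(v) = \frac{\B(S)}{\Gamma},
\end{align*}
matching the definition of the expected benefit $\B(S)$.

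I expect the only genuinely nontrivial step to be the equivalence in the second paragraph: correctly arguing that ``$S$ meets the reverse-reachable set of $v$'' is the same event as ``$v$ lies in the forward-reachable set $R(g,S)$'' under the \emph{same} realization $g$. Everything after that is linearity of expectation together with Fubini on finite nonnegative sums, plus a term-by-term match against the definition of $\B(S)$. A secondary point worth stating explicitly is that the source selection and the edge sampling are independent, so their joint law factorizes as $\frac{b(v)}{\Gamma}\Pr[g]$; this factorization is what legitimizes writing the probability as the product-weighted double sum above.
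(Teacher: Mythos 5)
Your proof is correct: the paper states this lemma without proof (deferring to the cited reference \cite{Nguyen161}), and your argument --- reducing the event $R_j \cap S \neq \emptyset$ to $v \in R(g,S)$ under the same realization, then factoring the joint law as $\frac{b(v)}{\Gamma}\Pr[g]$ and exchanging sums to recover the definition of $\B(S)$ --- is exactly the standard derivation given in that reference. Nothing is missing; the independence of source selection and edge sampling that you flag explicitly is indeed the only point where care is needed.
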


 For a collection of $T$ random hyperedges sets $\R=\{R_1, R_2,\ldots,R_T\}$, we denote by 
 \[
 Cov_{\R}(S) = \sum_{j=1}^T {Z_j}, 
 \] 
 the number of hyperedges sets that intersect $S$, and 
 \[
 \B_\R( S ) = \frac{Cov_{\R}(S)}{T} \times \Gamma,
 \]
 the estimation of $\B(S)$ via $\R$. 
 
As shown in Theorem \ref{thm:main}, \ISSA{} has an approximation of $(1 - \epsilon)$ with a probability of at least $(1 - \delta)$. The key point to improve the current best ratio of $(1 - 1/e - \epsilon)$ (and $(1 - 1/\sqrt{e} - \epsilon)$) to $(1 - \epsilon)$ lies in solving the Maximum Coverage (MC) problem after generating $\mathcal{R}$ random hyperedges sets of samples (line 7 of Alg. \ref{algo:issa}). Instead of using greedy technique as in all existing algorithms, we solve the MC {\em exactly} using Integer Linear Program (ILP) (detailed in subsection \ref{sec:ILP}).

As exactly solving ILP for MC is NP-hard itself, we need to reduce the time and memory complexities as much as possible. This becomes the solely drive force for the design of our algorithm, which is handled as follows.

First, we need to keep the ILP search space small at the first phase during the searching for the candidate solution $\hat S_k$. Else, the ILP solver cannot be executed. It is worth noting that existing solutions only focused on reducing the {\em total} number of samples generated, not at the searching phase. Relevant to our approach, SSA \cite{Nguyen163} does have the first phase, however, it has fixed parameter setting, thus cannot achieve optimal number of samples for this phase. This enforces us to carefully generate the first set $\mathcal{R}$ with size $\Lambda$ as shown in line 1 of Alg. \ref{algo:issa}. And once $\hat S_k$ is not good enough, \AddSamples{} will generate an additional set of samples, which should be dynamically determined in order to meet the requirement of ILP, that is, it should be just large enough to find a near-optimal candidate solution to \CTVM{}. We will discuss more details about \AddSamples{} later in subsection \ref{sec:proc}.

Second, in an effort to keep the ILP size small, we need to avoid executing \AddSamples{} as much as possible. Therefore, we need to put more effort in proving the quality of candidate solutions, which is handled by \Verify{}, described in subsection \ref{sec:proc}.




 We discuss the rest of \ISSA{} in the following subsections.

\vspace{-0.1in}
\subsection{Mixed Integer Linear Programming $ILP_{MC}$}\label{sec:ILP}
Given a collection of hyperedges sets $\mathcal R$ and the cost $c(v)$ of selecting nodes $v \in V$ and a seek size $k$, we formulate the following ILP$_{MC}(\mathcal R, c, k)$, to find the optimal solution over the generated hyperedges sets $\mathcal R$ to the MC problem.
\begin{small}
 \begin{align}
    \text{maximize} & \sum_{R_j \in \mathcal R} (1-y_j) &\\
    \text{subject to} &\sum_{v\in V}{s_v} \leq k & \label{eq:17}\\
        &\sum_{v \in R_j} s_v + y_j \geq 1 & \forall R_j \in \mathcal R\label{eqn:sample-constraint}\\
        &s_i \in \{0,1\} & y_j \in [0, 1]
 \end{align}
 \end{small}
Here, $s_v=1$ iff node $v$ is selected into the seed set, and $s_v =0$, otherwise. The variable $y_j=1$ indicates that the hyperedges sets $R_j$ cannot be covered by the seed set ($S=\{ v | s_v = 1\}$) and $y_j=0$, otherwise. The objective aims to cover as many hyperedges sets as possible while keeping the cost at most $k$ using the constraint (\ref{eq:17}).

We note that the benefit in selecting the node $b(u)$ does not appear in the above ILP as it is embedded in the Benefit Sampling Algorithm (BSA) in \cite{Nguyen161}.

On one hand, the above ILP can be seen as a Sample Average Approximation (SAA) of the CTVM problem as it attempts to find optimal solutions over the randomly generated samples. On the other hand, it is different from the traditional SAA discussed in Sec. \ref{sec:stoc} as it does not require the realization of all random variables, i.e., the status of the edges. Instead, only a local portion of the graph surrounding the sources of hyperedges sets need to be revealed. This critical difference from traditional SAA significantly reduces the size of each sample, effectively, results in a much more compact ILP. 

\subsection{The \Verify{} and \AddSamples{} Procedures}\label{sec:proc}

As shown in Alg. \ref{algo:verify}, \Verify{} takes a candidate solution $\hat S_k$, precision limit $v_{max}$, and the maximum number of hyperedges sets $T_{cap}$ as the input.  It keeps generating hyperedges sets to estimate  $\B(\hat S_k)$ until either the relative error reaches $\epsilon/2^{v_{max}-1}$ or the maximum number of generated samples $T_{cap}$ is reached. 

 \Verify{} uses the stopping rule algorithm in \cite{Nguyen163} to estimate the influence. It generates a new pool of  random hyperedges sets, denoted by $\mathcal{R}_{ver}$. For each pair $\epsilon_2', \delta_2'$, derived from $\epsilon_2$ and $\delta_2$, the stopping rule algorithm will stop when either the cap $T_{max}$ is reached or there is enough evidence (i.e. $cov \geq \Lambda_2$) to conclude  that
 \[
 Pr[(1-\epsilon_2')\B(\hat S_k) \le  \B_{\R_{ver}} (\hat S_k) \leq (1 + \epsilon_2') \B(\hat S_k)] \ge 1 - \delta_2'.
 \]

The value of $\delta_2'$ is selected as in line 3 so that the probability of the union of all the bad events is bounded by $\delta_2$.

If the stopping rule algorithm stops within $T_{cap}$ hyperedges sets, the algorithm evaluates the relative difference $\epsilon_1$ between the estimations of $\B(\hat S_k)$ via $\R_{t}$ and $\R_{ver}$. It also estimates 
the relative gap $\epsilon_3$ between $\B(S^*_k)$ and its estimation using $\R_{t}$. If the combined gap $(1-\epsilon_1)(1-\epsilon_2)(1-\epsilon_3) < (1-\epsilon)$, \Verify{} returns `true' and goes back to \ISSA{}. In turn, \ISSA{} will return $\hat S_k$ as the solution and terminate. 

If $\hat S_k$ does not pass the check in \Verify{}, \ISSA{} uses the sub-procedure \AddSamples{} (Alg. \ref{algo:add}) to increase the size of the hyperedges sets.  Having more samples will likely lead to better candidate solution $\hat S_k$, however, also increase the ILP solving time. Instead of doubling the current set $\mathcal{R}$ as SSA does, we carefully use the information in the values of $\epsilon_1$ from the previous round together with $\epsilon$ to determine the increase in the sample sizes. Recall that the sample size is $e^{t\epsilon} \Lambda$ for increasing integer $t$. Thus, we increase the sample size via increasing $t$  by (approximately) $\log_{e^\epsilon} \frac{\epsilon_1^2}{\epsilon^2}$. We force $t$ to increase by at least one and at most $\Delta t_{max} = \lceil 2/\epsilon \rceil$. That is the number of samples will increase by a multiplicative factor between $e^{\epsilon} \approx (1+\epsilon)$ and $e^{\Delta t_{max}} \approx e^2$.

\begin{algorithm}\small
	\caption{\Verify}
	\label{algo:verify}
	\begin{algorithmic}
		\State {\textbf{Input}: Candidate solution $\hat S_k$,  $v_{max}$, $\epsilon$, $t_{max}$, and $T_{cap}$.}
		\State {\textbf{Output}: Passed/not passed and $\epsilon_1$.}
        \State { 1: \text{ } $\mathcal R_{ver}\leftarrow \emptyset,  \delta_2 = \frac{\delta}{4}, cov=0,  \epsilon_1=\epsilon_2=\infty$ }
        \State { 2: \text{ }\textbf{for} $i\leftarrow 0$ to $v_{max} -1$ \textbf{do}}
   		\State { 3: \text{ }\text{ \ \ } $\epsilon_2 = \min\{\epsilon,1\}/2^i, \epsilon_2' = \frac{\epsilon_2}{1-\epsilon_2};\delta_2' = \delta_2/(v_{max} \times t_{max} ) $}
        \State { 4: \text{ }\text{ \ \ } $\Lambda_2 = 1+(2+2/3\epsilon_2')(1 + \epsilon_2')\ln \frac{2}{\delta_2'} \frac{1}{(\epsilon_2')^2}$}	    
		\State { 5: \text{ }\text{  \  } \textbf{while} $cov < \Lambda_2$ \textbf{do}}        
   		\State { 6: \text{ }\text{  \ \ \ } Generate $R_j$ with BSA \cite{Nguyen161} and add it to $\mathcal R_{ver}$}
		\State { 8: \text{ }\text{  \ \ \ } \textbf{if} $R_j \cap S \neq \emptyset$ \textbf{then} $cov = cov + 1$}
        \State { 9: \text{ }\text{  \ \ \ } \textbf{if} $|\mathcal R_{ver}| > T_{cap}$ \textbf{then} return $<false, \epsilon_1>$}
        \State {10: \text{ }\text{  \ } \textbf{end while} }
		\State {11: \text{ }\text{  \ } $\B_{ver}(\hat S_k) \leftarrow  \Gamma 
        \frac{cov}{|\mathcal R_{ver}|}$, $\epsilon_1 \leftarrow  1-\frac{\B_{ver}(\hat S_k)}{\B_{\mathcal R}(\hat S_k)}$ }                
		\State {12: \text{ }\text{  \ } \textbf{if} $(\epsilon_1 > \epsilon)$ \textbf{then} return $<false, \epsilon_1>$}        
		\State {13: \text{ }\text{  \ }  $\epsilon_3 \leftarrow 
 \sqrt{\frac{3\ln ( t_{max}/\delta_1)}{(1-\epsilon_1)(1-\epsilon_2)Cov_{\R_t}(\hat S_k) } }$.}
		\State {14: \text{ }\text{  \ } \textbf{if} $(1-\epsilon_1)(1-\epsilon_2)(1-\epsilon_3)>(1-\epsilon)$ \textbf{then}\\
		\text{ }\text{ \ \ \  \ \ \ \ \  \ } return $<true, \epsilon_1>$}
        \State {15: \text{ }\textbf{end for}}  
		\State {16: return $<false, \epsilon_1>$}
	\end{algorithmic}
\end{algorithm}

\begin{algorithm}\small
	\caption{\AddSamples}
	\label{algo:add}
	\begin{algorithmic}
		\State {\textbf{Input}:  $t$ and $\epsilon_1$.}
		\State {\textbf{Output}: $t$.}
        \State {1: \text{ } $\Delta t_{max} = \lceil 2/\epsilon \rceil$ }
        \State {2: \text{ } return $t + \min\{ \max \{ \lceil 1/\epsilon \ln \frac{\epsilon_1^2}{\epsilon^2}\rceil, 1\}, \Delta t_{max} \}$ }
	\end{algorithmic}
\end{algorithm}

\vspace{-0.1in}
\section{Optimality of \ISSA{}}\label{sec:complexity}
In this section, we prove that  \ISSA{} for arbitrary cost \CTVM{} problem returns a solution $\hat S$ that is optimal up to a multiplicative error $1-\epsilon$ with high probability. Fig. \ref{fig:ssa} shows the proof map of our main Theorem \ref{thm:main}. We first prove the optimality of \ISSA{} in the case of the uniform cost and then extend it to arbitrary cost in subsection \ref{sub:ctvm}.

\vspace{-0.1in}
\subsection{Uniform Cost \CTVM{}}\label{sec:uniform}
 
\begin{figure}[!ht]
	\centering
	\includegraphics[width=0.7\linewidth]{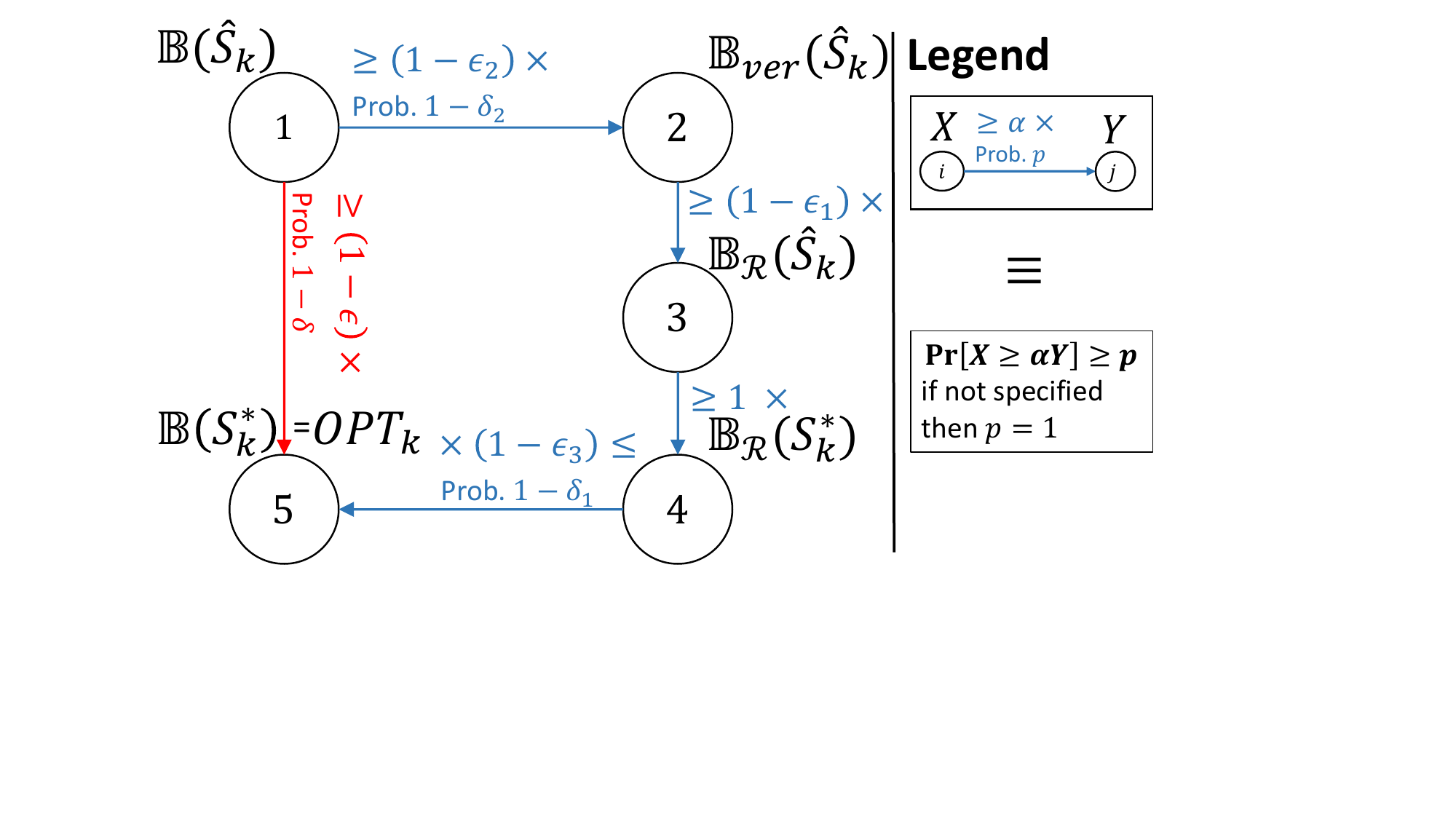}
	\caption{Proof map of the main Theorem \ref{thm:main}}
	\vspace{-0.08in}
	\label{fig:ssa}
	\vspace{-0.1in}
\end{figure}

Let $R_1, R_2, R_3,\ldots, R_j,\ldots$ be the random hyperedges sets generated in \ISSA{}. Given a seed set $S$, define random variables $Z_j$ as in (\ref{def:randvar}) and $Y_j = Z_j - \E[Z_j]$. Then $Y_j$ satisfies the conditions of a \emph{martingale} \cite{Chung06}, i.e., $\E[Y_i | Y_1, Y_2,\ldots,Y_{i-1}] = Y_{i-1}$ and $\E[Y_i]<+\infty$. This martingale view  is adopted from \cite{Tang15} to cope with the fact that random hyperedges sets might not be independent due to the stopping condition: the later hyperedges sets are generated only when the previous ones do not satisfy the stopping conditions. We obtain the same results in Corollaries 1 and 2 in \cite{Tang15}.
	\begin{lemma}[\cite{Tang15}]
	\label{lem:chernoff}
	Given a set of nodes $S$ and random hyperedges sets $\R = \{R_j\}$ generated in \ISSA,  define random variables $Z_j$  as in (\ref{def:randvar}). Let $\mu_Z=\frac{\B(S)}{\Gamma}$ and $\hat \mu_Z = \frac{1}{T}\sum_{i=1}^{T}Z_i$ be an estimation of $\mu_Z$, for  fixed $T > 0$. For any $0 \leq \epsilon$, the following inequalities hold
	\begin{align}
		\Pr[\hat \mu \geq  (1+ \epsilon) \mu]  &\leq
		e^{\frac{-T\mu\epsilon^2}{2 + \frac{2}{3}\epsilon}}, \text{ and}\\
				\Pr[\hat \mu \geq  (1+ \epsilon) \mu]  &\leq
	\label{eq:chernoff3}	e^{\frac{-T\mu\epsilon^2}{3}}, \text{ and}\\
		\Pr[\hat \mu \leq  (1- \epsilon) \mu] &\leq e^{\frac{-T\mu\epsilon^2}{2}}.
	\end{align}
	\label{lem:error_bound}
 \end{lemma}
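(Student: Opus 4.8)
The plan is to reduce each of the three inequalities to a single tail bound on the martingale $X_T = \sum_{i=1}^{T} Y_i$, where $Y_i = Z_i - \E[Z_i]$, and then invoke a Bernstein-type concentration inequality for martingales. Since $T$ is fixed and the discussion preceding the lemma already establishes that $\{Y_i\}$ is a martingale difference sequence (this is precisely what lets us handle the fact that the RR sets are not independent under the adaptive stopping rule), we have $\hat\mu = \frac{1}{T}\sum_{i=1}^T Z_i$, so the event $\{\hat\mu \ge (1+\epsilon)\mu\}$ coincides with $\{X_T \ge T\mu\epsilon\}$ and $\{\hat\mu \le (1-\epsilon)\mu\}$ with $\{X_T \le -T\mu\epsilon\}$. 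It therefore suffices to bound the upper and lower tails of $X_T$ at deviation level $\lambda = T\mu\epsilon$.

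First I would collect the two structural facts needed to feed the martingale inequality. Each increment is bounded above, $Y_i = Z_i - \mu \le 1-\mu \le 1$ since $Z_i \in \{0,1\}$, and the conditional variance is controlled, $\Var[Y_i \mid Y_1,\dots,Y_{i-1}] = \mu(1-\mu) \le \mu$, so the accumulated conditional variance obeys $\sum_{i=1}^T \Var[Y_i \mid \cdot] \le T\mu$. With these in hand I would apply the martingale tail inequality of \cite{Chung06}: for a martingale whose increments are bounded above by $M$ and whose total conditional variance is at most $\sigma^2$,
\[
\Pr[X_T \ge \lambda] \le \exp\!\left(-\frac{\lambda^2}{2\bigl(\sigma^2 + M\lambda/3\bigr)}\right).
\]
Substituting $M=1$, $\sigma^2 = T\mu$, and $\lambda = T\mu\epsilon$ and simplifying collapses the exponent to exactly $-\frac{T\mu\epsilon^2}{2 + \frac{2}{3}\epsilon}$, which is the first inequality.

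The remaining two inequalities follow by elementary manipulation of the same estimate. For the second, I would note that in the regime of interest the denominator satisfies $2 + \frac{2}{3}\epsilon \le 3$ (whenever $\epsilon \le \tfrac{3}{2}$), so replacing it by $3$ only weakens the bound and yields $\Pr[\hat\mu \ge (1+\epsilon)\mu] \le e^{-T\mu\epsilon^2/3}$. For the third (lower tail), I would apply the symmetric version of the inequality to the martingale $-X_T$; because the downward deviation is governed purely by the variance term and incurs no compensating increment in the $M\lambda/3$ direction, the cleaner sub-Gaussian exponent $-\frac{T\mu\epsilon^2}{2}$ emerges directly with $\lambda = T\mu\epsilon$ and $\sigma^2 = T\mu$.

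The only genuine subtlety — and the step I would handle most carefully — is the martingale justification rather than the arithmetic. Because \ISSA{} generates later RR sets only after earlier ones fail the stopping test, the $Z_i$ are not i.i.d., and a naive Chernoff bound for independent variables does not apply. The martingale viewpoint adopted from \cite{Tang15} circumvents this: conditioning on the history $Y_1,\dots,Y_{i-1}$, each new RR set is still drawn afresh by \BRIS{} with the same source distribution, so $\E[Z_i \mid Y_1,\dots,Y_{i-1}] = \mu$ by Lemma \ref{lem:ris}, which is exactly the martingale-difference property. Once that is verified, everything else is the routine Bernstein computation above.
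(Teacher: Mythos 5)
You should first note what you are being compared against: the paper does not prove this lemma at all --- it imports it from Corollaries 1 and 2 of \cite{Tang15}, adding only the surrounding remark that $Y_j = Z_j - \E[Z_j]$ gives a martingale structure so that the adaptivity of the stopping rule is harmless. Your reconstruction follows precisely the route of that cited proof: reduce each tail event to a deviation of $X_T = \sum_{i=1}^T Y_i$ at level $\lambda = T\mu\epsilon$, use the increment bound $M=1$ and the conditional variance bound $\sum_i \Var[Y_i \mid \cdot] \le T\mu$ (valid because each $Z_i$, conditioned on the history, is exactly a Bernoulli($\mu$) variable --- a $\{0,1\}$ variable whose conditional mean is $\mu$ by Lemma \ref{lem:ris} applied to the freshly generated RR set), and invoke the Freedman/Bernstein-type martingale inequality. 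Your arithmetic for the first bound is correct, and your derivation of the second from the first is fine, with the honest caveat that it needs $2+\frac{2}{3}\epsilon \le 3$, i.e.\ $\epsilon \le \frac{3}{2}$; the lemma's ``for any $0\le\epsilon$'' is in fact too generous for that inequality, since the $e^{-T\mu\epsilon^2/3}$ form is the standard multiplicative Chernoff bound that fails for large $\epsilon$, so your restriction is a feature, not a bug.

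The one step that does not go through as written is the lower tail. Applying the ``symmetric version'' of the stated inequality to $-X_T$ does not produce the exponent $-\frac{T\mu\epsilon^2}{2}$: the increments of $-X_T$ are bounded above by $\mu$, so a generic application yields
\begin{align*}
\Pr[X_T \le -\lambda] \le \exp\left(-\frac{\lambda^2}{2\left(T\mu + \mu\lambda/3\right)}\right) = \exp\left(-\frac{T\mu\epsilon^2}{2\left(1+\mu\epsilon/3\right)}\right),
\end{align*}
which is strictly weaker than claimed; your heuristic that the downward deviation ``incurs no compensating increment'' is correct in outcome but is not itself an argument. The clean constant $\frac{1}{2}$ requires the Bernoulli-specific conditional moment computation: for $t \ge 0$, $\E[e^{-tY_i} \mid Y_1,\dots,Y_{i-1}] = e^{t\mu}\left(1-\mu+\mu e^{-t}\right) \le e^{\mu\left(t-1+e^{-t}\right)} \le e^{\mu t^2/2}$, so $\E[e^{-tX_T}] \le e^{T\mu t^2/2}$, and Markov's inequality with the optimal choice $t = \epsilon$ gives $\Pr[X_T \le -T\mu\epsilon] \le e^{-T\mu\epsilon^2/2}$. (Equivalently, one runs the exact Chernoff optimization and uses $(1-\epsilon)\ln(1-\epsilon)+\epsilon \ge \epsilon^2/2$, which is how \cite{Tang15} obtains its Corollary 2.) This is a two-line repair, not a wrong approach, but as submitted the third inequality is unproved.
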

A common framework in \cite{Tang14, Tang15,Nguyen163} is to generate random hyperedges sets and  use the greedy algorithm to select $k$ seed nodes that cover most of the generated hyperedges sets. It is shown in Lemma 3 \cite{Tang14} that  $(1-1/e-\epsilon)$ approximation algorithm with probability $1-\delta$ is obtained when the number of hyperedges sets reaches a threshold
\begin{align}
\label{eq:theta}
	\theta(\epsilon, \delta) = c\times (8+\epsilon) \left(\ln \binom{n}{k} + \ln\frac{2}{\delta}\right) \frac{n}{OPT_k}\frac{1}{\epsilon^2},
\end{align}
for some constant $c>0$.

The constant $c$ is bounded to be $8+\epsilon$ in \cite{Tang14}, brought down to $8(e-2)(1-1/(2e))^2 \approx 3.7$ in \cite{Nguyen161} using the zero-one estimator in the work of Dagum et al. \cite{Dagum00}. And the current best is $c=2+2/3 \epsilon$, inducted from Lemma 6 in \cite{Tang15}.

Note that $\theta(\epsilon, \delta)$ cannot be used to decide how many hyperedges sets we need to generate since $\theta$ depends on the unknown value $OPT_k$, of which computation is \#P-hard. 

To overcome that hurdle, a simple stopping rule is developed in \cite{Nguyen161} to check on whether we have sufficient hyperedges sets to guarantee, w.h.p., a $(1- \epsilon)$ approximation. The rule is that we can stop when we can find \emph{any} seed set $S_k$, of size $k$, that coverage $Cov_{\mathcal R}(S_k)$ exceeds 
\begin{align}
\Lambda_{max} &= (1+\epsilon) \theta(\epsilon, \delta/4) \times \frac{OPT_k}{n}\\
		&= (1+\epsilon) (2+\frac{2}{3}\epsilon)\frac{1}{\epsilon^2}(\ln \frac{2}{\delta/4}+ \ln {n \choose k})
\end{align}

Our algorithm \ISSA{} utilizes this stopping condition to guarantee that at most $O(\theta(\epsilon, \delta))$ hyperedges sets are used in the ILP in the worst-case. As a result, the size of the ILP is kept to be almost linear size, assuming $k \ll n$. 
\begin{theorem}
\label{tiptop:size}
The expected number of non-zeros in the ILP of \ISSA{} is 
\[
O\left( \left(\ln {n \choose k} + \ln\frac{2}{\delta}\right) \frac{n}{\epsilon^2}\right).
\]
\end{theorem}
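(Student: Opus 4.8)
The plan is to count the non-zeros of $ILP_{MC}(\mathcal R_t,c,k)$ exactly and then bound that count \emph{from below}, mirroring the exact counting used for T-EXACT but now with the worst-case sample regime driving the $\Omega(\cdot)$. Let $T=|\mathcal R_t|$ be the number of RR sets handed to the ILP when \ISSA{} terminates. The budget row $\sum_{v} s_v\le k$ contributes $n$ non-zeros; each RR-set row $\sum_{v\in R_j}s_v+y_j\ge 1$ contributes $|R_j|+1$; and the objective contributes $T$ from the $y_j$. Hence the number of non-zeros equals exactly $n+2T+\sum_{j=1}^{T}|R_j|$, so it is bounded below by $\sum_{j=1}^{T}|R_j|$. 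The whole task therefore reduces to lower-bounding $\E\big[\sum_{j=1}^{T}|R_j|\big]$.

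First I would separate the two random quantities $T$ and the $|R_j|$'s. Since later RR sets are generated only when the earlier stopping conditions fail, $T$ is a stopping time adapted to the same martingale filtration used in Lemma~\ref{lem:chernoff}, so I would invoke optional stopping (Wald's identity) to write $\E\big[\sum_{j=1}^{T}|R_j|\big]=\E[T]\,\E[|R_1|]$, where $\E[|R_1|]$ is the expected size of one BSA-generated RR set. I would then evaluate $\E[|R_1|]$ exactly from Lemma~\ref{lem:ris} at singletons: since $\Pr[u\in R_1]=\B(\{u\})/\Gamma$, linearity gives $\E[|R_1|]=\frac1\Gamma\sum_{u\in V}\B(\{u\})$, which by subadditivity of $\B$ on the optimal $k$-set is at least $OPT_k/\Gamma$.

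To lower-bound $\E[T]$ I would use the worst-case termination branch, in which \ISSA{} halts only once $Cov_{\mathcal R}(\hat S_k)>\Lambda_{max}$. By the lower-tail inequality of Lemma~\ref{lem:chernoff}, $Cov_{\mathcal R}(\hat S_k)$ cannot exceed $(1+\epsilon)T\,OPT_k/\Gamma$ except with small probability, so crossing the threshold $\Lambda_{max}=(1+\epsilon)(2+\tfrac23\epsilon)\frac1{\epsilon^2}\big(\ln\tfrac{2}{\delta/4}+\ln\binom{n}{k}\big)$ forces $\E[T]=\Omega\big(\tfrac{\Gamma}{OPT_k}\Lambda_{max}\big)=\Omega\big(\theta(\epsilon,\delta)\big)$. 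Multiplying the two lower bounds yields $\E[\text{non-zeros}]\ge\E[T]\,\E[|R_1|]=\Omega\big(\tfrac{\Gamma}{OPT_k}\Lambda_{max}\cdot\tfrac1\Gamma\sum_{u}\B(\{u\})\big)=\Omega\big(\tfrac{\sum_{u}\B(\{u\})}{OPT_k}\cdot\tfrac1{\epsilon^2}(\ln\binom{n}{k}+\ln\tfrac{2}{\delta})\big)$, and the remaining step is to certify that the benefit-weighted RR mass $\sum_{u}\B(\{u\})/OPT_k$ is $\Omega(n)$ for the instances that actually push \ISSA{} into the $\Lambda_{max}$ branch.

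The main obstacle is precisely this last identification, together with the lower bound $\E[T]=\Omega(\theta)$. I must rule out \emph{premature} termination through the \Verify{} `pass' branch (which would leave a strictly \emph{smaller} ILP) and argue that the adaptive schedule $N_t=\Lambda e^{\epsilon t}$ combined with \AddSamples{} keeps the run inside the regime where the expected total RR mass is linear in $n$. Controlling $\E[T]$ \emph{from below} — as opposed to the from-above control that suffices for the matching upper bound — is the delicate point, because the stopping time couples the sample count to the random coverage of the candidate $\hat S_k$; the martingale concentration of Lemma~\ref{lem:chernoff} is the tool that makes this coupling tractable and lets the worst-case product reach the claimed $\Omega\big((\ln\binom{n}{k}+\ln\tfrac{2}{\delta})\tfrac{n}{\epsilon^2}\big)$ scale.
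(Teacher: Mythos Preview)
Your proposal takes the $\Omega(\cdot)$ at face value and builds a genuine \emph{lower}-bound argument. That is not what the paper is doing. From the surrounding text (``at most $\Omega(\theta(\epsilon,\delta))$ RR sets'', ``the size of the ILP is kept to be almost linear'', ``in practice the ILP size is much smaller than the worst-case bound'') it is clear the authors are abusing $\Omega$ where they mean $O$: the theorem is a \emph{worst-case upper bound} on the ILP size. Their proof is a two-line upper-bound calculation: the stopping rule caps the number of RR sets at $O((1+\epsilon)\theta(\epsilon,\delta))$, the expected size of a single RR set is bounded via the Tang et al.\ lemma by a quantity proportional to $OPT_k$, and multiplying cancels the $OPT_k$ in the denominator of $\theta$ to leave $O\big((\ln\binom{n}{k}+\ln\tfrac{2}{\delta})\,n/\epsilon^2\big)$.

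Once you read the statement that way, your plan is aimed at the wrong inequality, and the obstacle you yourself flag is fatal rather than merely ``delicate''. You cannot lower-bound $\E[T]$ by $\Omega(\theta)$, because the entire point of \ISSA{} is that \Verify{} typically passes long before the $\Lambda_{max}$ branch is reached; the paper even advertises that the sample count is orders of magnitude below the worst case. Similarly, the step where you need $\sum_u \B(\{u\})/OPT_k=\Omega(n)$ is not true for general instances (take a single node that influences everyone: $OPT_k\approx\Gamma$ while $\sum_u\B(\{u\})$ need not be anywhere near $n\Gamma$). So even as a lower bound the argument would not close. The fix is simply to argue the other direction: bound $T$ above by the $\Lambda_{max}$ stopping condition, bound $\E[|R_j|]$ above by the Tang et al.\ estimate, and multiply.
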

\begin{proof}
	The expected number of hyperedges sets is $O((1+\epsilon)\theta(\epsilon, \delta))$. Moreover, the expected size of each hyperedges sets is upper-bounded by $\frac{OPT_k}{k}$ (Lemma 4 and Eq. (7) in \cite{Tang14}). Thus, the size of the ILP which is equal the total sizes of all the hyperedges sets plus $n$, the size of the cardinality constraint, is at most $O((1+\epsilon)\theta(\epsilon, \delta)  OPT_k) = O\left( \left(\ln {n \choose k} + \ln\frac{2}{\delta}\right) \frac{n}{\epsilon^2}\right)$
\end{proof}

In practice, the number of required samples in our ILP is many times larger than the minimum number of required samples, i.e., the ILP size is much smaller than the worst-case bound in the above theorem, as shown in Section \ref{sec:exp}.

\begin{lemma}
\label{lem:opt_bound}
Let $S^*_k$ be a seed set of size $k$ with maximum benefit, i.e., $\B(S^*_k) = OPT_k$. Denote by $\mu^*_k=\frac{OPT_k}{\Gamma}, \delta_1 = \delta/4$ and 
\[
\epsilon^*_{t} = \sqrt{\frac{3\ln (t_{max}/\delta_1)}{N_t \mu^*_k} }.
\]
We have: 
\[
\Pr[ \B_{\R_t}(S^*_k) \geq (1-\epsilon^*_t) \B(S^*_k)\  \mathbf{\forall t=1..t_{max}}] \geq 1 - \delta_1.
\]
\end{lemma}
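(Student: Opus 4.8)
The plan is to establish the bound for a single index $t$ via the Chernoff-type lower-tail inequality and then union-bound over the at most $t_{max}$ possible values of $t$. First I would observe that, for the fixed optimal set $S^*_k$, the estimator unfolds as $\B_{\R_t}(S^*_k) = \Gamma\,\hat\mu$ with $\hat\mu = \frac{1}{N_t}\sum_{i=1}^{N_t} Z_i$, while $\B(S^*_k) = \Gamma\,\mu^*_k$ by Lemma~\ref{lem:ris}. Hence the bad event $\B_{\R_t}(S^*_k) < (1-\epsilon^*_t)\B(S^*_k)$ is exactly the lower-tail deviation $\hat\mu < (1-\epsilon^*_t)\mu^*_k$. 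Crucially, $S^*_k$ is a deterministic set that does not depend on the generated RR sets, and $\mu^*_k$ (hence each $\epsilon^*_t$) is a fixed quantity; therefore the third inequality of Lemma~\ref{lem:chernoff}, applied with $T = N_t$, $\mu = \mu^*_k$ and $\epsilon = \epsilon^*_t$, applies verbatim. The martingale formulation underlying Lemma~\ref{lem:chernoff} already absorbs the fact that the $R_j$ are not independent (they are produced adaptively under the stopping rule), so no separate independence argument is needed.

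Applying that inequality gives
\[
\Pr[\hat\mu \le (1-\epsilon^*_t)\mu^*_k] \le \exp\!\left(-\tfrac{1}{2}\, N_t\, \mu^*_k\, (\epsilon^*_t)^2\right).
\]
The definition $\epsilon^*_t = \sqrt{3\ln(t_{max}/\delta_1)/(N_t\mu^*_k)}$ is chosen precisely so the exponent telescopes: $N_t\,\mu^*_k\,(\epsilon^*_t)^2 = 3\ln(t_{max}/\delta_1)$, whence the per-$t$ failure probability is at most $\exp\!\left(-\tfrac{3}{2}\ln(t_{max}/\delta_1)\right) = (\delta_1/t_{max})^{3/2}$.

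Finally I would union-bound. Since $\delta_1 = \delta/4 < 1 \le t_{max}$ we have $\delta_1/t_{max} \le 1$, and raising a quantity in $(0,1]$ to the power $3/2 > 1$ only decreases it, so $(\delta_1/t_{max})^{3/2} \le \delta_1/t_{max}$ for every $t$. Summing over $t = 1, \ldots, t_{max}$ bounds the probability that the estimate ever falls below its target by $t_{max} \cdot (\delta_1/t_{max}) = \delta_1$, giving the claimed $\Pr[\forall t:\ \B_{\R_t}(S^*_k) \ge (1-\epsilon^*_t)\B(S^*_k)] \ge 1-\delta_1$. I would emphasize that union-bounding over the \emph{entire} grid $t=1,\ldots,t_{max}$ --- rather than the specific (possibly random) sequence of indices the algorithm actually visits via \AddSamples{} --- is what makes the guarantee hold simultaneously for whatever $t$ the run reaches. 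The only real subtlety, and the step I would be most careful about, is exactly this interaction between the adaptive sampling and the ``for all $t$'' quantifier: the argument works only because $N_t$ is a deterministic function of $t$, the set $S^*_k$ is fixed in advance, and the martingale Chernoff bound of Lemma~\ref{lem:chernoff} tolerates the induced dependence among the $R_j$.
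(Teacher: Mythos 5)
Your proposal is correct and follows essentially the same route as the paper: apply the martingale Chernoff lower-tail bound of Lemma~\ref{lem:chernoff} to the fixed set $S^*_k$ with $T=N_t$, $\mu=\mu^*_k$, $\epsilon=\epsilon^*_t$ (valid since $N_t$ is a deterministic function of $t$ and the martingale view absorbs the dependence among RR sets), and then union-bound over the entire grid $t=1,\ldots,t_{max}$. The only cosmetic difference is the constant: the paper invokes the exponent-$\frac{1}{3}$ variant (Eq.~(\ref{eq:chernoff3})), for which the factor $3$ in the definition of $\epsilon^*_t$ makes the per-$t$ failure probability exactly $\delta_1/t_{max}$, whereas you use the exponent-$\frac{1}{2}$ lower-tail inequality to get the slightly stronger $(\delta_1/t_{max})^{3/2}$ and then correctly relax it to $\delta_1/t_{max}$ before summing.
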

\begin{proof}
Apply  Lem. \ref{lem:chernoff} (Eq. \ref{eq:chernoff3}) for seed set $S^*_k$, mean $\mu^*_k$, $\epsilon^*_t$ and $N_t$ samples. For each $t\in [1..t_{max}]$, we have 
\begin{align}
&\Pr[ \B_{\R_t}(S^*_k) < (1-\epsilon_t) \B(S^*_k) ] \\
&=\Pr\left[ \frac{\B_{\R_t}(S^*_k)}{\Gamma} < (1-\epsilon_t) \frac{\B(S^*_k)}{\Gamma} \right]\\
&< e^{-\frac{3N_t\mu^*_k\epsilon^{*2}_{t}}{3}} = e^{-\ln(\frac{t_{max}}{\delta_1})}< \delta_1/t_{max}.
\end{align}

Taking the union bound over all $t\in [1, t_{max}]$ yields the proof.
\end{proof}

\begin{lemma}\label{lem:5}[\circled{1}$\rightarrow$\circled{2}]
The probability of the bad event that there exists  some set of $t \in [1, t_{max}], i \in [0, v_{max}-1]$, and $\epsilon_2 = \epsilon/2^i$ so that the \Verify{} algorithm returns some bad estimation of $\hat S_k$ at line 10, i.e.,
\begin{align*}
\B_{ver}(\hat S_k) > \frac{1}{(1-\epsilon_2)}\B(\hat S_k),
\end{align*}
is less than $\delta_2$. Here $\B_{ver} (\hat S_k) = \Gamma\times \frac{Cov_{\mathcal R_{ver}}(\hat S_k)}{|\mathcal R_{ver}|}$ be an estimation of $\B(\hat S_k)$ using random hyperedges sets in $\mathcal R_{ver}$.
\end{lemma}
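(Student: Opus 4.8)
The plan is to recognize the stated bad event as precisely the upper-tail failure of the Dagum--Karp--Luby--Ross stopping rule that \Verify{} runs, and then to control all such events simultaneously by a union bound over the $t_{max}\times v_{max}$ parameter pairs $(t,i)$.

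First I would record the algebraic identity that drives the whole argument. Since line 3 of \Verify{} sets $\epsilon_2' = \frac{\epsilon_2}{1-\epsilon_2}$, one has $1+\epsilon_2' = \frac{1}{1-\epsilon_2}$. Hence the bad event $\B_{\R_{ver}}(\hat S_k) > \frac{1}{1-\epsilon_2}\B(\hat S_k)$ is exactly the event that the stopping-rule estimate overshoots its target by the factor $(1+\epsilon_2')$, i.e.\ $\B_{\R_{ver}}(\hat S_k) > (1+\epsilon_2')\B(\hat S_k)$. Because $\mathcal R_{ver}$ is generated afresh and independently of the searching pool $\mathcal R_t$ that produced $\hat S_k$, I may condition on $\hat S_k$ and treat it as a fixed seed set of mean $\mu = \B(\hat S_k)/\Gamma$ when applying a concentration bound.

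Next, for one fixed pair $(t,i)$ I would bound the overshoot probability. The stopping rule halts the first time the coverage reaches $\Lambda_2$; writing $T=|\mathcal R_{ver}|$ for this (random) stopping time, overestimation means $\Gamma\,\Lambda_2/T > (1+\epsilon_2')\B(\hat S_k)$, equivalently $T < \Lambda_2/\big((1+\epsilon_2')\mu\big)$. By monotonicity of the coverage count, stopping this early is the same as the fixed-horizon event that at $T^{\ast}\approx \Lambda_2/\big((1+\epsilon_2')\mu\big)$ samples the empirical mean already exceeds $(1+\epsilon_2')\mu$. Applying the upper-tail inequality of Lemma \ref{lem:chernoff} with this $T^{\ast}$ gives a bound $\exp\!\big(-\tfrac{T^{\ast}\mu (\epsilon_2')^2}{2+\frac23\epsilon_2'}\big)$, and substituting $T^{\ast}\mu \approx \Lambda_2/(1+\epsilon_2')$ together with the choice of $\Lambda_2$ on line 4 makes the exponent at least $\ln\frac{2}{\delta_2'}$, so this single-pair probability is at most $\delta_2'/2 \le \delta_2'$.

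Finally I would union-bound. As $t$ ranges over $[1,t_{max}]$ and $i$ over $[0,v_{max}-1]$ there are at most $t_{max}\,v_{max}$ pairs, and line 3 sets $\delta_2' = \delta_2/(v_{max}\,t_{max})$; summing the per-pair bound $\delta_2'$ yields total failure probability at most $t_{max}\,v_{max}\,\delta_2' = \delta_2$, as claimed. The union bound needs no independence across pairs, so the fact that the nested pools $\mathcal R_{ver}$ are reused across the inner loop over $i$ causes no difficulty. The delicate step is the third one: rigorously converting the ``stopped too early'' event for the random stopping time $T$ into a fixed-horizon concentration event (handling the ceiling in the definition of $\Lambda_2$ and the off-by-one in $T^{\ast}$), and justifying that the martingale hypotheses of Lemma \ref{lem:chernoff} indeed hold for the accumulating RR sets in $\mathcal R_{ver}$.
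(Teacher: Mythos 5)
Your proposal is correct and follows essentially the same route as the paper: both reduce the bad event, via the identity $1+\epsilon_2' = \frac{1}{1-\epsilon_2}$, to the $(1+\epsilon_2')$-overshoot of the stopping rule with threshold $\Lambda_2$, bound each of the at most $t_{max}\times v_{max}$ invocations by $\delta_2'$, and finish with a union bound using $\delta_2' = \delta_2/(v_{max}\, t_{max})$. The only difference is that the paper invokes the stopping rule theorem of \cite{Dagum00} as a black box (with the constant $4(e-2)$ replaced by $2+2/3\epsilon$ thanks to Lemma \ref{lem:chernoff}), whereas you re-derive that guarantee inline from Lemma \ref{lem:chernoff} through the ``stopped-too-early equals fixed-horizon overshoot'' equivalence.
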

\begin{proof}
After reaching line 10 in \Verify{}, the generated hyperedges sets within \Verify{}, denoted by $\mathcal R_{ver}$,  will satisfy the condition that
\vspace{-.1in}
\[
cov = Cov_{\mathcal R_{ver}}(\hat S_k) \geq \Lambda_2 = 1+(2+\frac{2}{3}\epsilon_2')(1 + \epsilon_2')\ln \frac{2}{\delta_2'} \frac{1}{\epsilon_2'^2},
\]
where $\epsilon_2' = \frac{\epsilon_2}{1-\epsilon_2}$.

According to the stopping rule theorem\footnote{replacing the constant $4(e-2)$ with $2+2/3\epsilon$ due to the better Chernoff-bound in Lemma \ref{lem:chernoff}} in \cite{Dagum00}, this stopping condition guarantees that
\begin{align*}
\Pr[   \B_{ver}(\hat S_k) > (1+\epsilon_2')\B(\hat S_k) ] < \delta_2'.
\end{align*}
Substitute $\epsilon'_2=\frac{\epsilon_2}{1-\epsilon_2}$ and simplify, we obtain
\begin{align*}
\Pr[ &\B_{ver}(\hat S_k) > \frac{1}{1-\epsilon_2}\B(\hat S_k)] < \delta_2'.
\end{align*}

The number of times \Verify{} invoked the stopping condition to estimate $\hat S_k$ is at most $t_{max} \times v_{max}$. Thus, we can use the union bound of all possible bad events to get a lower-bound $\delta_2' \times t_{max} \times v_{max} = \delta_2$ for the probability of existing a bad estimation of $\hat S_k$.
\end{proof}
\circled{2}$\rightarrow$\circled{3}: This holds due to the definition of $\epsilon_1$ in \Verify{}. Since $\epsilon_1 \leftarrow  1-\B_{ver}(\hat S_k)/\B_{\mathcal R_t}(\hat S_k)$, it follows that 
\[
\B_{ver}(\hat S_k)=  (1-\epsilon_1)\B_{\mathcal R_t}(\hat S_k).
\]
We note that it is possible that $\epsilon_1 < 0$, and the whole proof still goes through even for that case. In the experiments, we, however, do not observe negative values of $\epsilon_1$.

\circled{3}$\rightarrow$\circled{4}: Since $\hat S_k$ is an optimal solution of ILP$_{MC}(\mathcal R, c, k)$, it will be the $k$-size seed set that intersects with the maximum number of hyperedges sets in $\mathcal R$. Let $S^*_k$ be an optimal $k$-size seed set, i.e., the one that results in the maximum expected benefit\footnote{If there are multiple optimal solutions, we break the tie by using alphabetical order on nodes' ids.}. Since $|S^*_k| = k$, it follows that 
\[
\B_{\mathcal R_t}(\hat S_k) \geq \B_{\mathcal R_t}(S^*_k),
\]
where $\B_{\mathcal R_t}(\hat S_k)$ and  $\B_{\mathcal R_t}(S^*_k)$ denote the number of hyperedges sets in $\mathcal R_t$ that  intersect with $\hat S_k$ and $S^*_k$, respectively.

\begin{theorem}\label{thm:main}[Main theorem \circled{1}$\rightarrow$\circled{5}]
Let $\hat S_k$ be the solution returned by \ISSA{} (Algorithm \ref{algo:issa}). We have 
\begin{align}
\Pr[\B(\hat S_k) \geq (1-\epsilon) OPT_k] \geq 1 - \delta
\end{align}
\end{theorem}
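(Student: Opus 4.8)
The plan is to prove the composite $(1-\epsilon)$ guarantee by walking the proof map \circled{1}$\to$\circled{2}$\to$\circled{3}$\to$\circled{4}$\to$\circled{5}, conditioning on two independent ``good'' events and then reading off the explicit stopping test that \Verify{} enforces at termination. First I would fix the round $t$ and index $i$ (with $\epsilon_2 = \epsilon/2^i$) at which \ISSA{} halts with $passed$. The two bad events to exclude are: $E_2$ from Lemma \ref{lem:5}, that \Verify{} ever overestimates $\B(\hat S_k)$, which has probability at most $\delta_2 = \delta/4$; and $E_1$ from Lemma \ref{lem:opt_bound}, that $\B_{\R_t}(S^*_k) < (1-\epsilon^*_t)OPT_k$ for some $t$, which has probability at most $\delta_1 = \delta/4$. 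A union bound then leaves a failure probability of at most $\delta_1+\delta_2 \le \delta$, so it suffices to establish $\B(\hat S_k) \ge (1-\epsilon)OPT_k$ \emph{deterministically} on the complement $\overline{E_1}\cap\overline{E_2}$.

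On $\overline{E_2}$, Lemma \ref{lem:5} yields $(1-\epsilon_2)\B_{\R_{ver}}(\hat S_k) \le \B(\hat S_k)$. Combining this with the identity $\B_{\R_{ver}}(\hat S_k) = (1-\epsilon_1)\B_{\R_t}(\hat S_k)$ coming from the definition of $\epsilon_1$ (edge \circled{2}$\to$\circled{3}) and with the ILP optimality $\B_{\R_t}(\hat S_k) \ge \B_{\R_t}(S^*_k)$ (edge \circled{3}$\to$\circled{4}) gives
\begin{align*}
\B(\hat S_k) \ge (1-\epsilon_1)(1-\epsilon_2)\B_{\R_t}(\hat S_k) \ge (1-\epsilon_1)(1-\epsilon_2)\B_{\R_t}(S^*_k).
\end{align*}
This is node \circled{4}; the remaining work is the edge \circled{4}$\to$\circled{5}, i.e.\ replacing the sample estimate $\B_{\R_t}(S^*_k)$ by the true optimum $OPT_k$ at the computable price $\epsilon_3$.

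The crux of the argument, and the step I expect to be the main obstacle, is showing that the quantity $\epsilon_3$ computed at line 13 of \Verify{} is a valid upper bound for the \emph{uncomputable} sampling error $\epsilon^*_t$ of Lemma \ref{lem:opt_bound}, whose denominator $N_t\mu^*_k$ conceals the unknown $OPT_k$. I would derive this from the trivial inequality $\B(\hat S_k) \le OPT_k$ (since $S^*_k$ is a true optimal $k$-set and $|\hat S_k|\le k$) together with the coverage form of the first inequality above, $\B(\hat S_k) \ge (1-\epsilon_1)(1-\epsilon_2)\frac{Cov_{\R_t}(\hat S_k)}{N_t}\Gamma$. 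Using $OPT_k = \mu^*_k\Gamma$, these combine to give $(1-\epsilon_1)(1-\epsilon_2)Cov_{\R_t}(\hat S_k) \le N_t\mu^*_k$; comparing the two radicands then forces $\epsilon_3 \ge \epsilon^*_t$. Hence on $\overline{E_1}$, Lemma \ref{lem:opt_bound} delivers $\B_{\R_t}(S^*_k) \ge (1-\epsilon^*_t)OPT_k \ge (1-\epsilon_3)OPT_k$.

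Substituting this into node \circled{4} produces $\B(\hat S_k) \ge (1-\epsilon_1)(1-\epsilon_2)(1-\epsilon_3)OPT_k$, and since halting with $passed$ is precisely the test $(1-\epsilon_1)(1-\epsilon_2)(1-\epsilon_3) > (1-\epsilon)$ at line 14, I conclude $\B(\hat S_k) \ge (1-\epsilon)OPT_k$ on $\overline{E_1}\cap\overline{E_2}$, which is node \circled{5}. To complete the proof I would treat the alternative termination branch of Alg.\ \ref{algo:issa} (line 10, $Cov_{\R}(\hat S_k) > \Lambda_{max}$) separately: the calibration $\Lambda_{max} = (1+\epsilon)\theta(\epsilon,\delta/4)OPT_k/n$ guarantees that $\R_t$ already contains at least $\theta(\epsilon,\delta/4)$ RR sets, so that solving the MC problem \emph{exactly} via $\text{ILP}_{MC}$ (rather than greedily) meets the same $(1-\epsilon)$ bound with probability budget $\delta/4$. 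A final union bound over all branches keeps the total error at most $\delta$.
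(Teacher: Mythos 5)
Your proposal retraces the paper's own proof almost step for step: the same two lemma-driven bad events with budgets $\delta_1=\delta_2=\delta/4$, the identical chain $\B(\hat S_k) \geq (1-\epsilon_2)\B_{ver}(\hat S_k) \geq (1-\epsilon_1)(1-\epsilon_2)\B_{\R_t}(\hat S_k) \geq (1-\epsilon_1)(1-\epsilon_2)\B_{\R_t}(S^*_k)$, and, crucially, the same resolution of the edge \circled{4}$\rightarrow$\circled{5}: deriving $(1-\epsilon_1)(1-\epsilon_2)Cov_{\R_t}(\hat S_k) \leq N_t\mu^*_k$ from the optimality bound $\B(S^*_k) \geq \B(\hat S_k)$ so that the computable $\epsilon_3$ of line 13 dominates the uncomputable $\epsilon^*_t$ of Lemma \ref{lem:opt_bound}. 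Your treatment of the $Cov_{\R_t}(\hat S_k) > \Lambda_{max}$ exit with a further $\delta/4$ also matches the paper's third bad event, which the paper attributes directly to the stopping condition of BCT \cite{Nguyen161}.

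The one genuine omission is the paper's \emph{fourth} bad event: $\Pr[(t > t_{max}) \text{ and } (Cov_{\R_t}(\hat S_k) \leq \Lambda_{max})] < \delta/4$. This is not mere bookkeeping. Lemmas \ref{lem:opt_bound} and \ref{lem:5} are union bounds only over $t \in [1, t_{max}]$ (the per-round failure probabilities are $\delta_1/t_{max}$ and $\delta_2' = \delta_2/(v_{max} \times t_{max})$), while Algorithm \ref{algo:issa} places no explicit cap on $t$ --- the loop stops only on $passed$ or on the coverage test. When you ``fix the round $t$ at which \ISSA{} halts with $passed$,'' you silently assume $t \leq t_{max}$; if the algorithm were to pass \Verify{} at some round $t > t_{max}$, neither lemma covers that round and your conditional argument collapses. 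The paper closes this hole by charging $\delta/4$ to the event that $t$ exceeds $t_{max}$ while coverage stays below $\Lambda_{max}$ (justified by noting that $N_t \gg \theta(\epsilon,\delta)$ once $t > t_{max}$, so the coverage exit would have triggered), giving the total $\delta_1 + \delta_2 + \delta/4 + \delta/4 = \delta$ --- exactly the $\delta/4$ of slack your accounting leaves unused. With this fourth event added, your argument coincides with the paper's. (A minor wording point: nothing requires your two good events to be independent; the union bound alone suffices, and the paper claims nothing stronger.)
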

\begin{proof}
First we use the union bound to bound the probability of the bad events. Then we show if none of the bad events happen, the algorithm will return a solution satisfying $\B(\hat S_k) \geq (1-\epsilon) OPT_k$. The bad events and the bounds on their probabilities are
\begin{enumerate}
\item $\Pr[ \exists t: \B_{\R_t}(S^*_k) < (1-\epsilon^*_t) \B(S^*_k)\  ] < \delta_1$ (Lem. \ref{lem:opt_bound})
\item $\Pr[ \exists t, i, \epsilon_2 = \frac{\epsilon}{2^i}: \B_{\R_{ver}}(\hat S_k) > \frac{1}{1-\epsilon_2}\B(\hat S_k)] < \delta_2$ (Lem. \ref{lem:5})
\item $\Pr[ (Cov_{\R_{t}}(\hat S_k) > \Lambda_{max}) \\ \text{ and } (\B(\hat S_k) < (1-\epsilon) OPT_k) ] < \delta/4$
\item $\Pr[ (t > t_{max}) \text{ and } (Cov_{\R_{t}}(\hat S_k) \leq \Lambda_{max})] < \delta/4$
\end{enumerate}
The bounds in 1) and 2) come directly from Lems. \ref{lem:opt_bound} and \ref{lem:5}. The bound in 3) is a direct consequence of the stopping condition algorithm in \cite{Nguyen161}. The bound in 4) can be 
shown by noticing that when $t>t_{max} $ then $N_t \gg \theta(\epsilon, \delta)$. Apply the union bound, the probability that none of the above bad events happen is, hence, at most $\delta_1+\delta_2+\delta/4 + \delta/4 = \delta$. 

\textbf{Assume that none of the above bad events happen}, we show that \ISSA{} returns a $(1-\epsilon)$ optimal solution.
If \ISSA{} stops with $Cov_{\R_{t}}(\hat S_k) > \Lambda_{max}$, it is obvious that $\B(\hat S_k) \geq (1-\epsilon)OPT_k$, since the bad event in 3) do not happen. Otherwise, algorithm \Verify{}  returns `true' at line 13 for some $t \in [1, t_{max}]$ and $i\in [0, v_{max})$.

Follow the path \circled{1}$\rightarrow$\circled{2}$\rightarrow$\circled{3}$\rightarrow$\circled{4}. No bad event in 2) implies
\begin{align}
\label{eq:hrt}
\noindent\B(\hat S_k) &\geq (1-\epsilon_2) \B_{ver}(\hat S_k) \geq  (1-\epsilon_2)(1-\epsilon_1)\B_{\R_t}(\hat S_k)\\
\label{eq:1234}\noindent &\geq (1-\epsilon_2)(1-\epsilon_1) \B_{\R_t}(S^*_k).
\end{align}

No bad event in 1) implies
$
	 \B_{\R_t}(S^*_k) \geq (1-\epsilon^*_t) \B(S^*_k).
$

\textbf{Claim}: [\circled{4}$\rightarrow$\circled{5}]
 \[ \B_{\R_t}(S^*_k) \geq (1-\epsilon_3) \B(S^*_k),\] 
 
To show the above inequality, we prove that 
\begin{align*}
  \epsilon^*_t &= \sqrt{\frac{3\ln (t_{max}/\delta_1)}{N_t \mu^*_k} }  \leq \epsilon_3 = \sqrt{\frac{3\ln ( t_{max}/\delta_1)}{(1-\epsilon_1)(1-\epsilon_2)Cov_{\R_t}(\hat S_k) } }\\
\Leftrightarrow & N_t \mu^*_k \geq (1-\epsilon_1)(1-\epsilon_2) Cov_{\R_t}(\hat S_k)\\
\Leftrightarrow & N_t \B(S^*_k)/\Gamma \geq (1-\epsilon_1)(1-\epsilon_2) N_t \B_{\R_t}(\hat S_k)/\Gamma\\
\Leftrightarrow & \B(S^*_k) \geq (1-\epsilon_1)(1-\epsilon_2) \B_{\R_t}(\hat S_k).
\end{align*}
The last one holds due to the optimality of $S^*_k$ and Eq. \ref{eq:hrt}.
\begin{align*}
\B(S^*_k) \geq  \B(\hat S_k) \geq (1-\epsilon_1)(1-\epsilon_2) \B_{\R_t}(\hat S_k).
\end{align*}

Combine Eq. \ref{eq:1234} and the above claim, we have
\begin{align*}
\B(\hat S_k) &\geq  (1-\epsilon_2)(1-\epsilon_1) \B_{\R_t}(S^*_k) \\
			 &\geq  (1-\epsilon_2)(1-\epsilon_1)(1-\epsilon_3) \B(S^*_k)\\
			 &\geq  (1-\epsilon) OPT_k.
\end{align*}
The last one holds due to the terminating condition $(1-\epsilon_2)(1-\epsilon_1)(1-\epsilon_3) < (1-\epsilon)$ in line 13 in the \Verify{} algorithm.
\end{proof}


\vspace{-0.1in}
\subsection{Arbitrary Cost \CTVM{}}
\label{sub:ctvm}
We now consider the case of heterogeneous cost. Note that \ISSA{} and its proofs in subsection \ref{sec:uniform} already considered the heterogeneous benefit function $b(.)$ thus we only discuss the arbitrary cost function $c(.)$ in this subsection.

{\bf Changes in the algorithm.} With heterogeneous selecting cost, seed sets may have different sizes. We define $k_{max} = \max\{k: \exists S \subset V, |S|=k, c(S) \leq \kappa \}$.  The value of $\Lambda_{max}$ at line 3 of Alg. \ref{algo:issa} will be defined as follows:
\[
\Lambda_{max} = (1 + \epsilon)(2+\frac{2}{3}\epsilon)\frac{1}{\epsilon^2}[ \ln (8/\delta) + \min\{ k_{max} \ln n, n\}].
\]
Line 7 of Alg. \ref{algo:issa} will be replaced by $\hat S_k\leftarrow \text{ILP}_{MC}(\mathcal R_t, c, \kappa)$ where we pass the value $\kappa$ instead of $k$.

In addition, the cardinality constraint (\ref{eq:17}) will be changed into a knapsack constraint
$
\sum_{v\in V}{c(v) s_v} \leq \kappa,
$
where $c(v)$ is the cost of selecting node $v$ and $\kappa$ is the given budget.

The Verify and IncreaseSamples procedures are kept intact.

\begin{theorem}The generalized \ISSA{} as discussed above has an approximation ratio of $(1 - \epsilon)$ with high probability.
\end{theorem}

\begin{proof}
We follow the same proof map as in subsection \ref{sec:uniform}. All the previous proofs of the convergence and correctness are still held. Note that we only use $\Lambda_{max}$ in Lemma \ref{lem:5}, in which we apply the inequality $\Lambda_{max} \leq 2n^{2+\epsilon}$. This inequality still holds with the new value of $\Lambda_{max}$.
\end{proof}
\subsubsection{Time complexity} Assume that we solve $\ISSA$ using exhaustive search that provides the same worst-case time complexity as other methods such as branch-and-cut. 

The number of variables $y_j$ in $\ISSA$ equals to the number of hyperedges and is bounded by  $O(n k \log n \frac{1}{\epsilon^2 OPT_k})$
from Eq.~(\ref{eq:theta}). Since $OPT_k \geq k$, the number of variables in $\ISSA$ is $
N_{TipTop}= O(n\log n \frac{1}{\epsilon^2})$. 

For each of the $2^n$ possible assignments of $s_v$, we need to solve a remaining linear programming of size 
\begin{align}
M_{TipTop}&=O\left( \left(\ln {n \choose k} + \ln\frac{2}{\delta}\right) \frac{n}{\epsilon^2}\right)
\\
&= O\left(n k \ln n \frac{1}{\epsilon^2}\right) = O(n^2 \log n 1/\epsilon^2).
\end{align}

Apply Theorem \ref{theo:karmarkar} and note the high concentration of the number of hyperedges and their total size around the above values, we have, 
\begin{lemma}
The expected time to solve $\ISSA$ will be
\[
O\left( 2^n n^{3.5} \frac{1}{\epsilon^7} n^4 \frac{1}{\epsilon^4} polylog(n) \right) = O\left(2^n n^{7.5} \frac{1}{\epsilon^{11} } polylog(n) \right).
\]
\end{lemma}
In theory, the worst-case of $\ISSA$ (with a multiplicative  $\epsilon$  error guarantee) is better than T-SAA (with an additive error $\epsilon \sigma_B$ guarantee) by a factor $n^{5.5}$. In practice, $\ISSA$ is also much more efficient due to its simple constraints.

\section{Experiments}\label{sec:exp}
We conduct several experiments to illustrate the performance and
utility of \ISSA{}. First, the performance of \ISSA{} is compared to
both T-SAA and E-SAA. Our results show that it is magnitudes
faster while maintaining solution quality (sec.
\ref{sec:ilp-runtime}). We then apply \ISSA{} as a benchmark for
existing methods, showing that they perform better than their guarantee in certain
cases---but have degraded performance in others. Finally,
we conclude the section with an in-depth analysis of \ISSA{}'s
performance, with a focus on its sampling behavior.

We implemented \ISSA{} in Rust using Gurobi to solve the IP.\footnote{Code available at \texttt{\url{https://github.com/emallson/tiptop}}} Unless
noted otherwise, each experiment is run 10 times and the results
averaged. Settings for $\epsilon$ are listed with each experiment,
but $\delta$ is fixed at $1/n$. All influence values are obtained by running a separate estimation program with $\epsilon = 0.02$ using the seed sets produced by each algorithm as input. Throughout this section, we scale solution quality by an \textit{upper bound} on the optimal solution. When comparing to T-SAA and E-SAA, we treat the maximal performance of these algorithms as optimal to show the performance of \ISSA{}. Subsequently, we assume that \ISSA{} exactly matches its approximation guarantee, which places an upper bound on the optimal of $1 / (1 - \epsilon)$ times the \ISSA{} solution. 

\begin{table}[t]\small
  \centering
  \begin{tabular}{l|lrr}
    \textbf{Dataset} & \textbf{Network Type} & \textbf{Nodes} & \textbf{Edges} \\
    \hline \\ [-2ex]
    US Pol. Books \cite{political-books} & Recommendation & 105 & 442 \\
    GR-QC \cite{snapnets} & Collaboration& 5242 & 14496 \\
    Wiki-Vote \cite{snapnets} & Voting & 7115 & 103689 \\
    NetPHY \cite{Chen09_2} & Collaboration & 37149 & 180826 \\
    Slashdot \cite{snapnets} & Social & 82168 & 948464 \\
    Twitter \cite{Kwak10} & Social & 41M & 1.5B
  \end{tabular}
  \caption[]{\label{tab:nets} Networks used in our experiments.}
\end{table}

\subsection{Comparison to the Exact IPs}\label{sec:ilp-runtime}
Since both T-SAA and E-SAA produce exact results for influence
maximization, we use them to show the optimality of \ISSA{}. As the SAA-based methods lack scalability, we run on the 105-node US Political
Books network only.
As shown in Fig. \ref{fig:ilp-comparison}, \ISSA{}
consistently performs as well as T-SAA while performing more than ten
times faster even despite the threading difference. Note that the decreasing runtime of \ISSA{} with increasing $k$ is on the scale of 5-10 seconds and can easily be explained as variation in the number of samples needed by \ISSA{} and in running time of the IP solver. Different behaviors are observed in running time for T-SAA and E-SAA between normalized and random costs. The reason is that the sampling procedure in T-SAA and E-SAA do not take the node benefit in the account. In contrast, the sampling algorithm in \ISSA{} will sample hyperedges starting from high-benefit nodes more often. Thus, the \ISSA{} running time is more stable across cost settings. 


\begin{figure}[ht]
  \centering
  \begin{subfigure}[t]{0.22\textwidth}
    \includegraphics[width=0.98\textwidth]{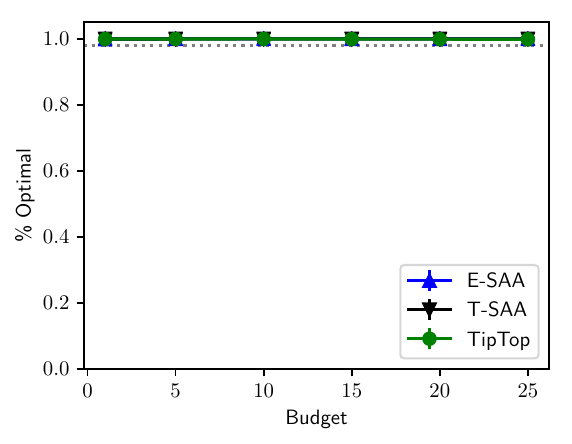}
    \caption{\label{fig:ilp-quality} Solution Quality (Normalized Linear costs)}
  \end{subfigure}
  \hspace{0.005\textwidth}
  \begin{subfigure}[t]{0.22\textwidth}
    \includegraphics[width=0.98\textwidth]{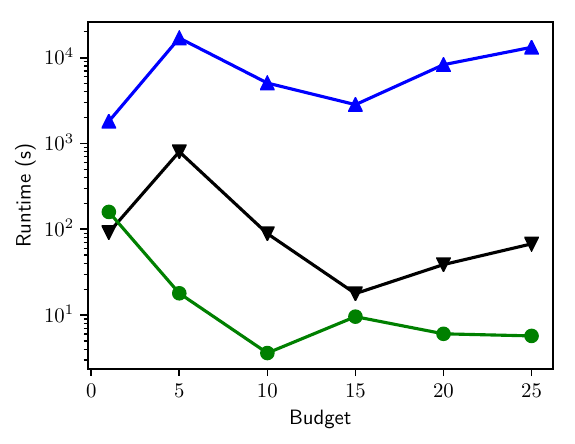}
    \caption{\label{fig:ilp-runtime} Running Time (Normalized Linear costs)}
  \end{subfigure}
  \hspace{0.005\textwidth}
  \begin{subfigure}[t]{0.22\textwidth}
    \includegraphics[width=0.98\textwidth]{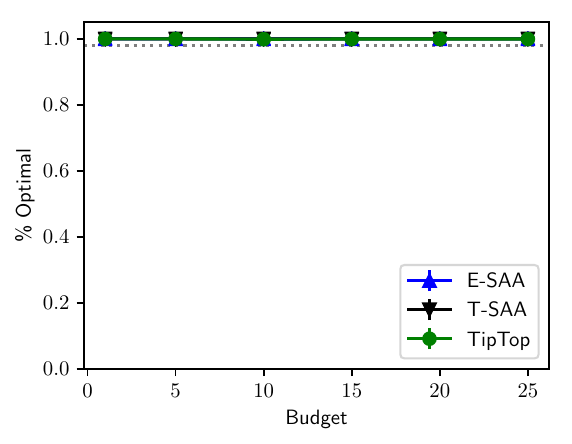}
    \caption{\label{fig:ilp-quality-random}Solution Quality (Random costs)}
  \end{subfigure}
  \hspace{0.005\textwidth}
  \begin{subfigure}[t]{0.22\textwidth}
    \includegraphics[width=0.98\textwidth]{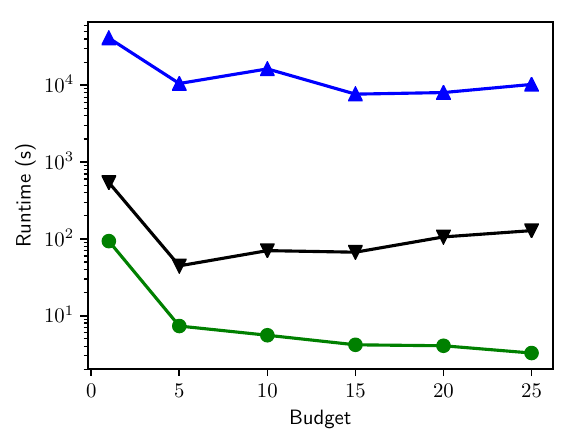}
    \caption{\label{fig:ilp-runtime-random}Running Time (Random costs)}
  \end{subfigure}
  \caption{\label{fig:ilp-comparison} The performance of \ISSA{}
    ($\epsilon = 0.02, \delta = 1/n$), T-SAA, and E-SAA ($T = 5000$) on the
    US Political Books\cite{political-books} network under the
    cost-aware problem setting. Costs are (\subref{fig:ilp-quality}-\subref{fig:ilp-runtime}) normalized by node degree and (\subref{fig:ilp-quality-random}-\subref{fig:ilp-runtime-random}) assigned uniformly at random on $[0,1)$. The $y$-axes of Figures (\subref{fig:ilp-runtime}), (\subref{fig:ilp-runtime-random}) are log-scaled.}
\end{figure}

\subsection{Benchmarking Greedy Methods}\label{sec:compare-greedy}

Having established that \ISSA{} is capable of producing almost exact solutions significantly faster than other IP-based solutions, we now exploit this property to place an upper bound on the performance of other algorithms. Ultimately, this allows us to make statements about the \textit{absolute} performance of these algorithms rather than merely their \textit{relative} performance. The algorithms we examine are IMM, BCT, and SSA under three problem settings across four networks.
All evaluations are under the IC model with edge probabilities set to $p(u,v) = 1/d_\text{in}(v)$, where $d_\text{in}(v)$ is the in-degree of $v$. This weight setting is adopted from prior work \cite{Tang15,Nguyen163}.

We first consider the traditional \IM{} problem, referred to as \textit{Unweighted} here to distinguish it from the subsequent problems. The authors' implementations of each algorithm are applied directly. Fig. \ref{fig:approx-perf-unweighted} shows that the decade-or-so of work on this problem has resulted in greedy solutions that far exceed their guarantee of $1 - 1/e - \epsilon$. Interestingly, this pattern continues under the \textit{Cost-Aware} setting (Fig. \ref{fig:approx-perf-cost}). 

\textbf{Parameter Settings}. The Cost-Aware setting generalizes the Unweighted setting by adding a cost to
each node on the network.
In a social network setting, the costs can be understood
as the relative price each user sets for their participation in the marketing campaign. Note that neither IMM nor SSA support costs natively. We extend both to this problem by scaling their objective functions by cost and limiting the number of selected nodes by the sum of costs instead of the number of nodes, which gives each an approximation guarantee of $1 - 1/\sqrt{e} - \epsilon$ \cite{Khuller1999}. We consider three ways users may determine their costs.

\textit{Random Costs.} First, users may determine the value of their
influence independently from the network. We model this with Random Costs: each node selects a cost at random on $[0,1)$.

\textit{Degree-Normalized Costs.}
Users may instead use the most readily-available metric of their relative importance to determine prices: follower count. We examine two different functions of this, \textbf{Normalized Linear Costs} and \textbf{Normalized Logarithmic Costs}, to develop an understanding of how this impacts algorithm performance. In the linear case, each user sets its cost as $\text{cost}(u) = \frac{n}{|E|} d_\text{out}(u)$ ($d_\text{out}(u)$ is
the out-degree of $u$). The logarithmic case wraps the degree component: $\text{cost}(u) = \frac{n}{|E|} \log(d_\text{out}(u))$. The $n / |E|$ term normalizes costs across networks to allow using the same budget on each. These two cases function as rough upper and lower bounds on reasonable behavior, as shown by the following example. 

\begin{figure}[t]
  \centering
  \begin{subfigure}[t]{0.22\textwidth}
    \includegraphics[width=0.98\textwidth]{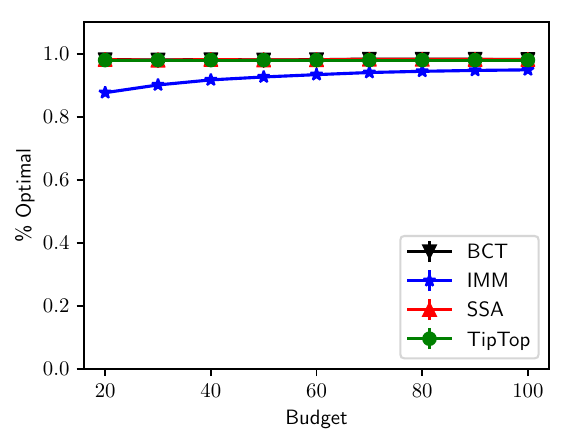}
    \caption{\label{fig:approx-perf-unweighted} Unweighted NetPHY }
  \end{subfigure}
  \hspace{0.005\textwidth}
  \begin{subfigure}[t]{0.22\textwidth}
    \includegraphics[width=0.98\textwidth]{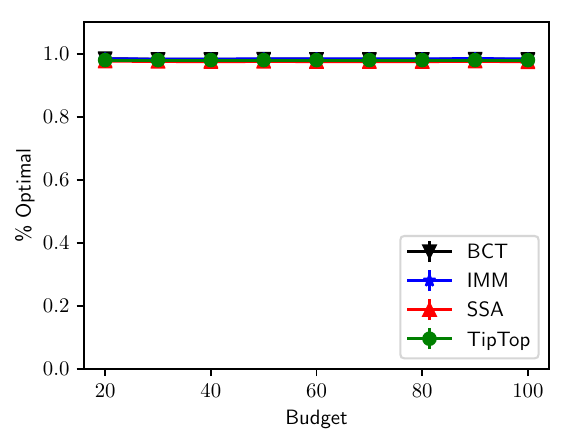}
    \caption{\label{fig:approx-perf-cost} Cost-Aware NetPHY }
  \end{subfigure}
  \caption{\label{fig:approx-perf-prior} Mean performance of each
    approximation algorithm as the budget is varied under the
    Unweighted and Cost-Aware problem settings with $\epsilon = 0.02$.
    $OPT$ is estimated assuming that \ISSA{} achieves exactly
    $(1 - \epsilon) OPT$.}
\end{figure}

Suppose two users are setting prices, one with two thousand followers and one with two million. In the linear case if the former user demands \$800 to become an influencer, then the latter user will demand \$800,000. On the other hand, in the log case the latter will demand only \$1,527 -- slightly less than double that demanded by the two-thousand-follower user. We find that the performance remains similar to the unweighted case under degree scaling (see Fig. \ref{fig:approx-perf-prior}).


\textit{Benefit Settings.} Lastly, we consider the full \textbf{CTVM} problem described above. We
target 5\% of each network at random, assigning each targeted user a
\textit{benefit} on $[0.1, 1)$ and each non-targeted user a benefit of $0$. Costs are assigned according to one of the previous models (Random, Degree-Normalized Linear or Logarithmic). Neither IMM nor SSA can be extended to this problem without significant modifications, and therefore neither incorporates benefits.
Fig. \ref{fig:approx-perf-ctvm} shows the performance on the GR-QC,
Wiki-Vote, NetPHY and Slashdot networks.

\begin{figure*}[htb]
  \centering
  \begin{subfigure}[t]{0.22\textwidth}
    \includegraphics[width=0.98\textwidth]{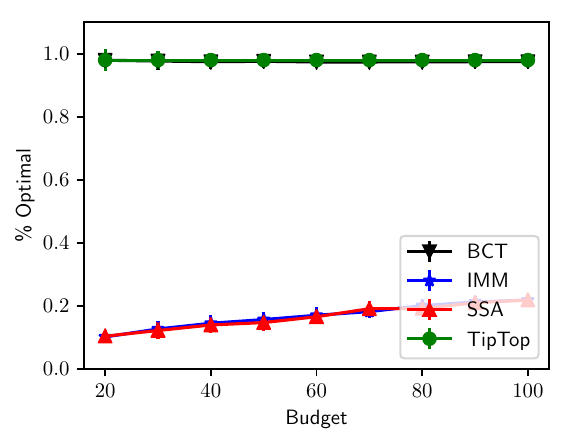}
    \caption{ GR-QC }
  \end{subfigure}
  \begin{subfigure}[t]{0.22\textwidth}
    \includegraphics[width=0.98\textwidth]{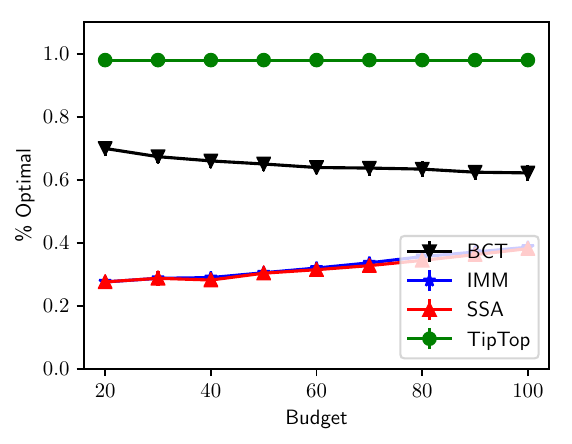}
    \caption{\label{fig:approx-ctvm-wiki-Vote} Wiki-Vote}
  \end{subfigure}
  \begin{subfigure}[t]{0.22\textwidth}
    \includegraphics[width=0.98\textwidth]{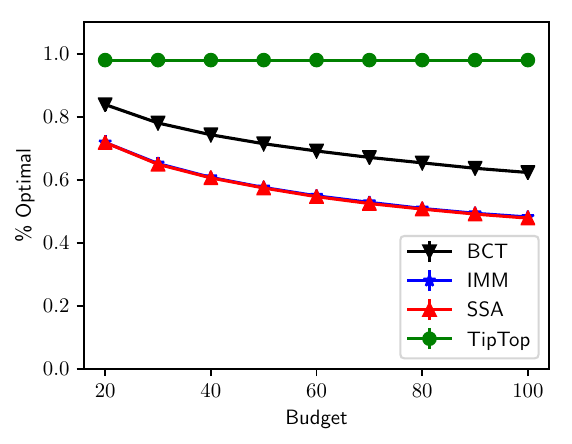}
    \caption{ \label{fig:approx-ctvm-phy} NetPHY }
  \end{subfigure}
  \begin{subfigure}[t]{0.22\textwidth}
    \includegraphics[width=0.98\textwidth]{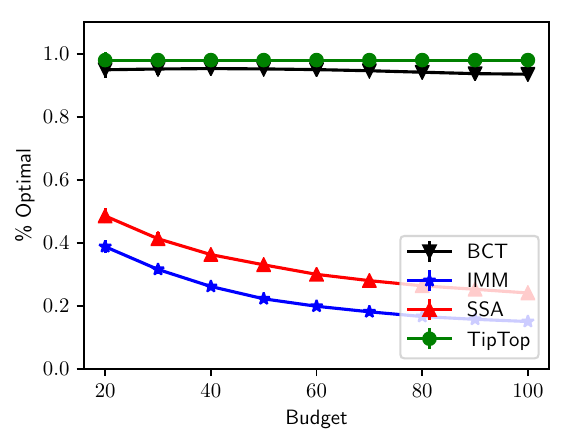}
    \caption{ \label{fig:approx-ctvm-slashdot} Slashdot}
  \end{subfigure}
  \caption{\label{fig:approx-perf-ctvm} Mean performance as the budget is varied under the CTVM
    problem setting with linear costs. Note that IMM and SSA have no guarantees on the CTVM setting.}
\end{figure*}
\begin{figure}
  \hspace{0.005\textwidth}
  \centering
    \begin{subfigure}[t]{0.22\textwidth}
    \includegraphics[width=0.98\textwidth]{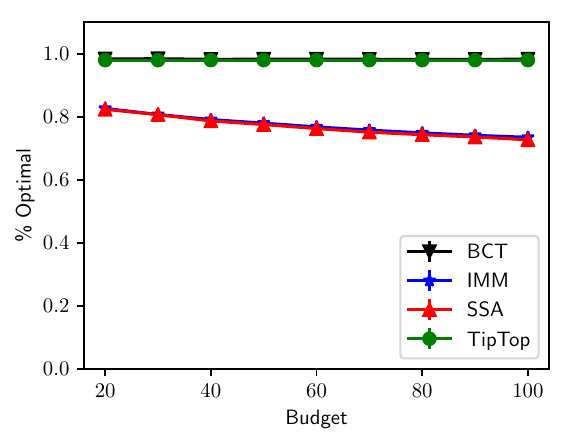}
    \caption{\label{fig:altcost-random} Random}
  \end{subfigure}
  \begin{subfigure}[t]{0.22\textwidth}
    \includegraphics[width=0.98\textwidth]{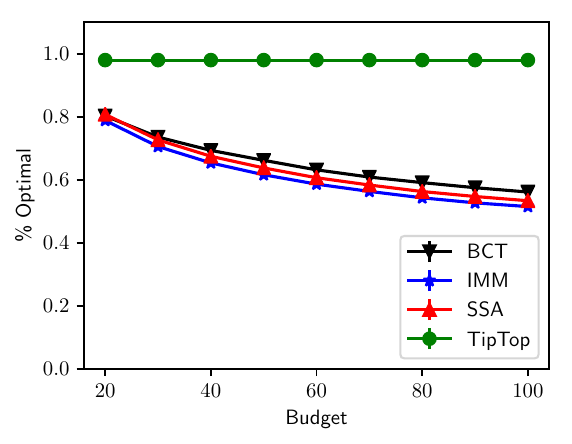}
    \caption{\label{fig:altcost-log} Log-Outdegree}
  \end{subfigure}

  \caption{\label{fig:altcosts} Performance on the Slashdot network under the CTVM setting with alternate costs.}
\end{figure}

From Fig. \ref{fig:approx-perf-ctvm}, we can see that the topology
has a significant impact on the performance of each algorithm. 
Further, on NetPHY both IMM and SSA do astonishingly well despite being ignorant
of the targeted nodes. However, their behavior remains inconsistent across datasets due to their ignorance of the varying benefit assignments. Figure \ref{fig:altcosts} shows that BCT
performs similarly regardless of cost function, though random costs can cause notably worsened performance (Fig \ref{fig:altcost-random}).

\subsection{Analyzing \ISSA{}'s performance}\label{sec:tiptop-perf}

We now turn our attention 
to dissecting the performance of \ISSA{} in more detailed. Even it can be seen from the above section (Fig. 4 and 5) that \ISSA{} outperforms existing solutions, we did not examine the relative costs of running
\ISSA{} on these or larger networks.

\begin{figure}[ht]
  \centering
  \begin{subfigure}[t]{0.22\textwidth}
    \includegraphics[width=0.98\textwidth]{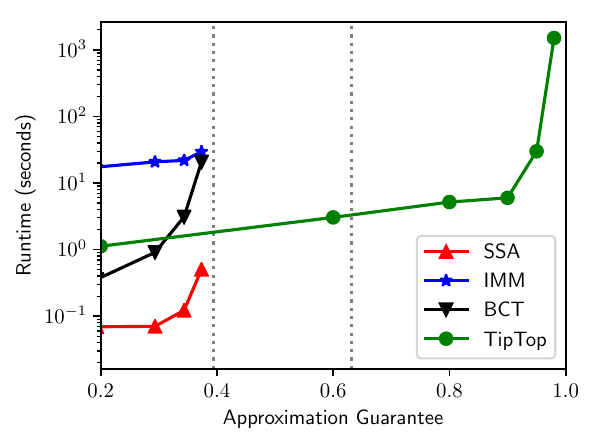}
    \caption{\label{fig:approx-runtime-eps} Runtime as $\epsilon$ is increased. Budget is held constant at $50$. Dotted lines show $1 - 1/\sqrt{e}$ and $1-1/e$, respectively.}
  \end{subfigure}
  \hspace{0.005\textwidth}
  \begin{subfigure}[t]{0.22\textwidth}
    \includegraphics[width=0.98\textwidth]{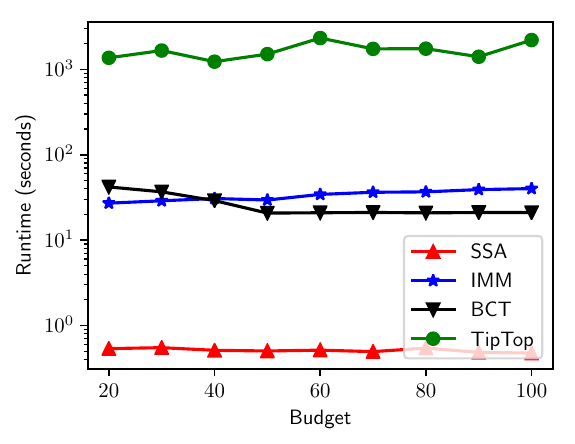}
    \caption{\label{fig:approx-runtime-k} Runtime as budget is increased. $\epsilon$ is held constant at $0.02$.}
  \end{subfigure}
  \caption{\label{fig:approx-runtime} Mean running time on the NetPHY network under the CTVM problem setting as $\epsilon$ and $k$ are varied.}
\end{figure}

{\bf Running Time.}
From Fig. \ref{fig:approx-runtime} we can see that the runtime performance of \ISSA{} is near that of the greedy algorithms. With a comparable approximation guarantee to the greedy methods, we can see that \ISSA{} has truly competitive runtime performance. However, as $\epsilon$ approaches 0 the runtime rapidly increases. Fig. \ref{fig:approx-runtime-k} illuminates one of the key costs of using an ILP: unpredictability. An ILP solver will in many cases find the solution relatively quickly, but in the worst case the complexity remains exponential. Finally, we note that in our tests on the Twitter dataset (Fig. \ref{fig:twitter}), we are able to solve the unweighted setting with a guarantee of 98\% in 13 hours and 28 minutes of CPU time -- which, as sampling is easily parallelized, takes \textit{under an hour} with 16 threads on our server. Further, for $\epsilon \geq 0.05$ the runtime of \ISSA{} remains competitive with and even outperforms other methods on the same dataset.

\begin{figure}[ht]
  \centering
  \begin{subfigure}[t]{0.22\textwidth}
    \includegraphics[width=0.98\textwidth]{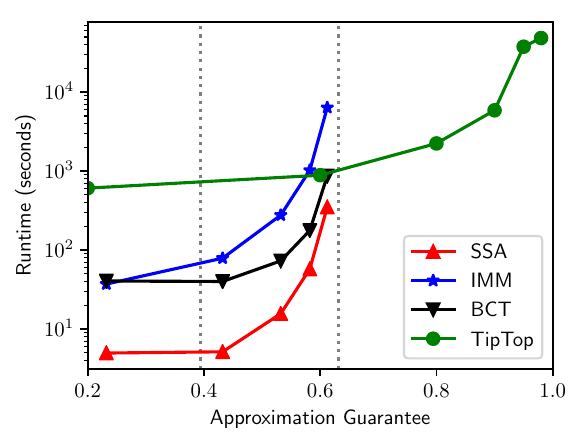}
    \caption{\label{fig:twitter-runtime} Running Time}
  \end{subfigure}
  \hspace{0.005\textwidth}
  \begin{subfigure}[t]{0.22\textwidth}
    \includegraphics[width=0.98\textwidth]{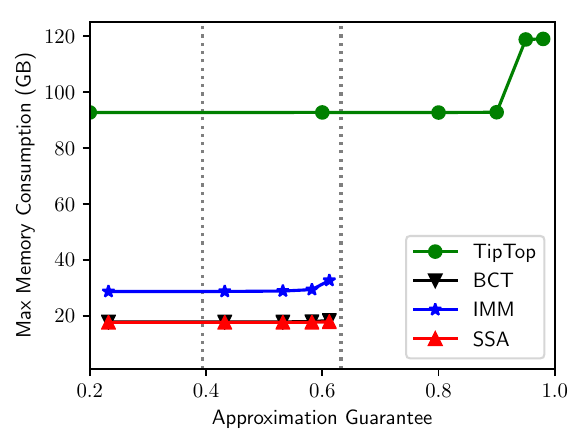}
   \caption{\label{fig:twitter-memory} Memory}
  \end{subfigure}
  \caption{\label{fig:twitter} Running time and memory consumption
    on the Twitter network under the unweighted problem setting. Only one repetition is used on this dataset. Budget is fixed at $50$. Dotted vertical lines show $1 - 1/\sqrt{e}$ and $1 - 1/e$, respectively.}
\end{figure}

{\bf Number of Samples.} 
The problem of maximizing influence on billion-scale graphs like Twitter is incredibly difficult -- a fact further complicated by the worst-case complexity of solving an IP. \ISSA{} addresses this by dramatically reducing the number of samples needed to solve the problem. Table \ref{tab:samples} shows that on average, \ISSA{} uses $10^3$ fewer samples than prior work. Each sample corresponds to one constraint in the IP formulation (specifically, Eqn. \ref{eqn:sample-constraint}), and therefore this reduction has a direct impact on the size of the resulting IP.

This is made possible by a SSA-like approach: adding a ``Stare'' phase to the algorithm. This phase validates the quality of the solution with significantly more samples than the IP uses. Since verification does not require an IP, it has similar complexity to the Stare phase in SSA. The samples used in each algorithm are divided into two sections in Table \ref{tab:samples}. The upper section corresponds to samples used in generating the solution, while the bottom section corresponds to samples used in the Stare phases of SSA and \ISSA{}. Note, however, that \ISSA{} takes a similar number of samples for verification to SSA when obtaining a similar approximation guarantee while also using dramatically fewer for solving the IP.

\begin{table}[ht]
  \centering\small
  \begin{tabular}{l|rrr}
\toprule
Coverage        & Unweighted      & Cost-Aware      & CTVM            \\
\midrule                          
IMM             & \num{38492250}  & \num{7682692}   & \num{7682692}   \\
SSA             & \num{236139}    & \num{517892}    & \num{91538}     \\
BCT             & \num{240567072} & \num{2260952}   & \num{1130476}   \\
\textsc{TipTop} & \num{12145}     & \num{60155}     & \num{3658}      \\
\midrule                          
Verification    & Unweighted      & Cost-Aware      & CTVM            \\
\midrule                          
SSA             & \num{5382957.2} & \num{1072126.5} & \num{1149159.4} \\
\textsc{TipTop} & \num{355000}    & \num{52000}     & \num{29000}     \\
\bottomrule
\end{tabular}

  \caption{\label{tab:samples} Mean required samples for each
    algorithm on NetPHY. Approximation guarantees are $~0.61$ for greedy methods
    in the unweighted case and $~0.37$ for the cost-aware case (and
    the CTVM case for BCT). The approximation guarantee for \ISSA{} is
    $0.6$ in all cases. \textit{Coverage}: Samples input into the
    MC solver. \textit{Verification}: Samples used to verify
    solution quality. IMM and BCT do not incorporate a verification
    stage.}
\end{table}

\textbf{Memory Consumption.} Lastly, we briefly examine the amount of memory consumed by each algorithm. We measure this by running each algorithm with the \texttt{time} binary\footnote{Note that this is distinct from the bash built-in command.} present on Debian, which reports peak memory usage in kilobytes. While for the greedy methods, the number of samples used is the dominating factor, the IP solver used in \ISSA{} may consume an additional large chunk of memory. Figure \ref{fig:twitter-memory} shows that while this may be true, it does not consume an unreasonable amount of memory.  \ISSA{} peaks at 120GB on the Twitter network---a value which we find wholly reasonable for solving with an error of $1 - \epsilon$ on a billion-scale network.

\vspace{-0.1in}
	\section{Conclusion}
    In this paper, we propose the first (almost) exact and optimal solutions to the \CTVM{} (and thus \IM{}) problem, namely T-EXACT, E-EXACT, and \ISSA{}. T-EXACT and E-EXACT use the traditional stochastic programming approach and thus suffer the scalability issue. In contrast, our \ISSA{} with innovative techniques in reducing the number of samples to meet the requirement of ILP solver is able to run on billion-scale OSNs such as Twitter. \ISSA{} has an approximation ratio of $(1-\epsilon)$ which significantly improves from the current best ratio $(1-1/e-\epsilon)$ for \IM{} and $(1-1/\sqrt{e}-\epsilon)$ for \CTVM{}. \ISSA{} also lends a tool to benchmark the absolute performance of existing algorithms on large-scale networks.

\section*{Acknowledgement}
This work was partially supported by the NSF CNS-1443905, NSF EFRI 1441231, and DTRA HDTRA1-14-1-0055.
	\bibliographystyle{IEEEtran}
	\bibliography{social,pids,targetedIM,budgetedIM,vul,cnd}

\begin{IEEEbiography}[{\includegraphics[width=1in,height=1.25in,clip,keepaspectratio]{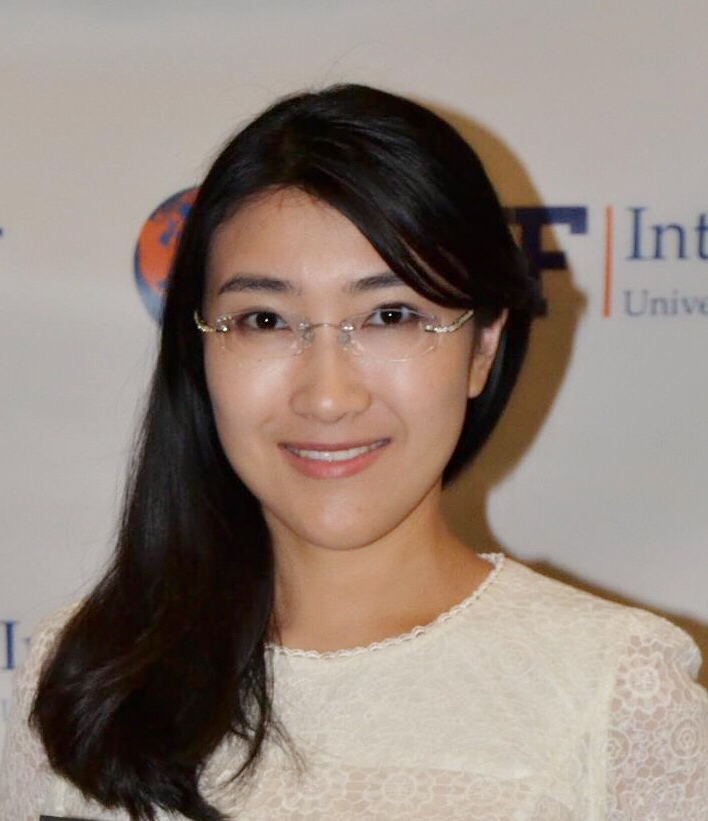}}]{Xiang Li}(M'18)
received her Ph.D. degree in Computer and Information Science and Engineering department of University of Florida. She is an Assistant Professor at the Department of Computer Engineering of Santa Clara University. Her research interests are centered on the large-scale optimization and its intersection with cyber-security of networking systems, big data analysis, and cyber physical systems.  She has published 25 articles in various prestigious journals and conferences such as IEEE Transactions on Mobile Computing, IEEE Transactions on Smart Grids, IEEE INFOCOM, IEEE ICDM, including one Best Paper Award in IEEE MSN 2014, Best Paper Nominee in IEEE ICDCS 2017, and Best Paper Award in IEEE International Symposium on Security
and Privacy in Social Networks and Big Data 2018.

She has served as Publicity Co-Chair of International Conference on Computational Data \& Social Networks 2018, Session Chair of ACM SIGMETRICS International Workshop 2018, and on TPC of many conference including IEEE ICDCS, IEEE ICDM workshop, COCOA etc., and also served as a reviewer for several journals such as IEEE Transactions on Mobile Computing, IEEE Transactions on Networks Science and Engineering and Journal of Combinatorial Optimization, etc. 
\end{IEEEbiography}

\begin{IEEEbiography}[{\includegraphics[width=1in,height=1.25in,clip,keepaspectratio]{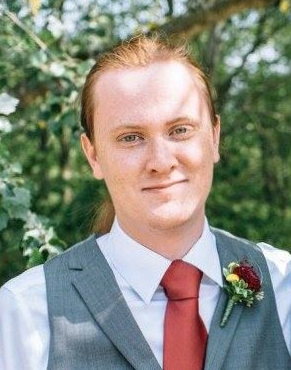}}]{J. David Smith}
received his Bachelor of Science degree from the University of Kentucky. He is currently working towards his PhD in Computer Science at CISE Department, University of Florida, USA under the supervision of Dr. My T. Thai. His current research interests are in security on online social networks.
\end{IEEEbiography}


\begin{IEEEbiography}[{\includegraphics[width=1in,height=1.25in,clip,keepaspectratio]{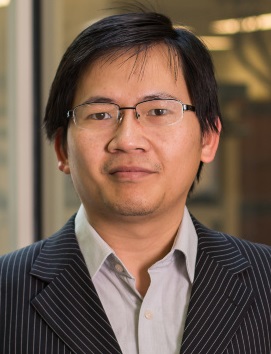}}]{Thang N. Dinh} (S'11-M'14) received the Ph.D. degree in computer engineering from the University of Florida, in 2013. He is currently an Assistant Professor with the Department of Computer Science, Virginia Commonwealth University. His research focuses on security and optimization challenges in complex systems, especially blockchain, social networks,  critical infrastructures. His research won several best papers and best papers nominees from IEEE ICDM, ACM SenSys, and CSoNet.  He served as the PC co-chair for the BlockSEA'18, COCOON'16, CSoNet'14 and the TPC of several conferences including the IEEE INFOCOM, the ICC, the GLOBECOM. He is an Assoc. Editor for Springer Nature, Journal on Comp. Social Sci. and Rev. Editor for Big Data.
\end{IEEEbiography}

\begin{IEEEbiography}[{\includegraphics[width=1in,height=1.25in,clip,keepaspectratio]{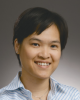}}] {My T. Thai}(M'06) is a UF Research Foundation Professor at the Computer and Information Science and Engineering department, University of Florida. Her current research interests are on scalable algorithms, big data analysis, cybersecurity, and optimization in network science and engineering, including communication networks, smart grids, social networks, and their interdependency. The results of her work have led to 6 books and 150+ articles, including IEEE MSN 2014 Best Paper Award, 2017 IEEE ICDM Best Papers Award, 2017 IEEE ICDCS Best Paper Nominee, and 2018 IEEE/ACM ASONAM Best Paper Runner up.

Prof. Thai has engaged in many professional activities. She has been a TPC-chair for many IEEE conferences, has served as an associate editor for Journal of Combinatorial Optimization (JOCO), Journal of Discrete Mathematics, IEEE Transactions on Parallel and Distributed Systems, IEEE Transactions on Network Science and Engineering, and a series editor of Springer Briefs in Optimization. She is a founding EiC of the Computational Social Networks journal. She has received many research awards including a UF Provosts Excellence Award for Assistant Professors, UFRF Professorship Award, a Department of Defense (DoD) Young Investigator Award, and an NSF (National Science Foundation) CAREER Award.
\end{IEEEbiography}
\end{document}